\pdfoutput=1
\documentclass[letterpaper,journal]{IEEEtran}
\usepackage{amsmath,amsfonts}
\usepackage{algorithm}
\usepackage{array}
\usepackage[caption=false,font=normalsize,labelfont=sf,textfont=sf]{subfig}

\usepackage{textcomp}
\usepackage{stfloats}
\usepackage{url}
\usepackage{verbatim}
\usepackage{graphicx}
\usepackage{cite}
\usepackage{amsmath, amssymb}
\usepackage{graphicx}
\usepackage{algorithm,float}
\usepackage{algpseudocode}
\usepackage{bm}
\usepackage{tikz}
\newtheorem{remark}{Remark}
\newtheorem{lemma}{Lemma} 
\newif\ifshowrevisions
\showrevisionstrue  

\ifshowrevisions

\else

\fi

\begin{document}

\title{Amplitude-Tunable Pinching Antenna Systems: Single-Mode Phase-Mismatch Radiation and Multiuser Beamforming}

\author{Askin Altinoklu, and Leila Musavian%
\thanks{Askin Altinoklu and Leila Musavian are with the School of Computer Science and Electronic Engineering, University of Essex,  United Kingdom (e-mail: \{askin.altinoklu, leila.musavian\}@essex.ac.uk). This work was supported by UK Research and Innovation under the UK Government’s Horizon Europe funding guarantee through the MSCA-DN SCION Project, Grant Agreement No. 101072375 (Grant No. EP/X027201/1).}}

\maketitle

\begin{abstract}
Pinching antenna systems (PASS) enable reconfigurable radiating elements and extended line-of-sight communication, mitigating path loss effects. However, existing designs lack fully controllable radiation weights, as they are governed by structural parameters rather than explicitly assigned variables. In this paper, we introduce a new degree of freedom (DoF) for PASS by enabling radiation weight control through phase-mismatch manipulation of guided waves under single-mode excitation within a coupled-mode framework. By tuning the propagation constants of pinching antennas, independent complex-weight control of individual elements is achieved, transforming PASS into a weight-adaptive analog beamforming architecture. Based on this principle, we present a physics-based hardware model that provides a unified framework for both amplitude-tunable pinching beamforming and conventional equal-power radiation models, ensuring compatibility with existing PASS implementations, such as movable setups. To evaluate the proposed model, we formulate a sum-rate maximization problem for hybrid precoding in multiuser downlink systems and solve it using an alternating optimization framework that combines weighted minimum mean square error-based digital precoding with genetic algorithm-based optimization of PASS configurations, including various scenarios such as weight tuning, antenna movability, and discrete activation. Numerical results demonstrate that the amplitude-tunable PASS architecture achieves consistent performance gains over conventional arrays and existing PASS schemes, with pronounced improvements in interference-limited regimes under practical constraints.

\end{abstract}

\begin{IEEEkeywords}
Pinching antenna systems (PASS), amplitude-tunable beamforming, hybrid precoding, waveguide multiplexing.
\end{IEEEkeywords}

\section{Introduction}
\label{sec:Intro}

\IEEEPARstart{F}{uture} sixth-generation (6G) wireless systems are envisioned to support extremely high data rates beyond the capabilities of current antenna array technologies \cite{6g_kpi3}. Meeting these demands requires substantially larger antenna apertures and operation over wider bandwidths, motivating the evolution of multiple-input multiple-output/single-output (MIMO-MISO) arrays toward extremely large-scale arrays. Increasing the carrier frequency enables denser antenna packing and higher spatial resolution along with the advantage of availability of wider bandwidths, but also introduces severe propagation attenuation for wireless networks \cite{gMIMO3}. Conventional MIMO architectures can compensate for this loss by increasing the number of antenna elements, which entails a proportional growth in radio-frequency chains and power amplifiers. As the array size scales to very large apertures, hardware complexity, power consumption, and cost of hardware structure become dominant bottlenecks, ultimately limiting for the scalability of conventional array-based beamforming architectures \cite{gMIMO1}. To overcome these limitations, recent architectures such as holographic and tri-hybrid MIMO enable large-aperture beamforming with reduced hardware complexity by leveraging analog-domain signal processing to decouple spatial DoF from the number of RF chains \cite{hMIMO1, trihybrid_mimo}.

Despite their advantages, these state-of-the-art architectures rely on pre-configured radiation apertures, limiting their ability to adapt radiation locations to the propagation environment. To address this limitation, a promising concept for generating reconfigurable wireless communication environments has recently emerged with the technology proposed and demonstrated by NTT DOCOMO \cite{suzuki2022pinching}, namely Pinching-Antenna Systems (PASS). This system paves the way for a fundamentally different paradigm in which radiation locations are externally reconfigurable rather than confined to fixed apertures. 

PASS consists of a dielectric waveguide that guides high-frequency EM signals, where energy is radiated through small particles, referred to as pinching antennas, placed in close proximity to the waveguide \cite{Intro2_pinching}. Its operation follows coupled-mode theory: a secondary waveguide couples to the evanescent field of the main guide and extracts power. Under the single-mode operation, the power transfer between the guides is characterized by three parameters: the coupling coefficient, the transfer length, and the propagation-constant mismatch between their propagation constants \cite{direc_coupler_3}. The transfer length corresponds to the antenna length that enables complete power transfer under phase-matched propagation, whereas the coupling coefficient mainly depends on the separation between the guides and their EM properties \cite{direc_coupler_1}. Accordingly, the radiated power of individual pinching antenna is determined by the coupling condition established by the antenna geometry and waveguide interaction.

Extending this mechanism to multiple pinching points enables radiation and power extraction at multiple locations. Hence, PASS has been presented as a flexible antenna system that offers advantages over existing flexible structures such as fluid and movable antennas in mitigating large-scale path loss with low cost and reduced structural complexity \cite{Intro1_pinching}. Since the baseband signal is guided through a dielectric waveguide, PASS inherently supports analog-domain signal control with guided-wave apertures. Furthermore, its flexibility in controlling externally attached pinching antennas enables a beamforming mechanism often referred to as pinching beamforming, allowing reconfiguration through adding, removing, or relocating radiating elements beyond the capabilities of conventional movable and fixed array architectures \cite{Intro3_pinching}. Owing to the low-loss propagation ability of the state-of-the art dielectric waveguides, PASS has been demonstrated for its ability of extending effective line-of-sight (LoS) communication over tens of meters, presenting the opportunity for alleviating path-loss attenuation and severe fading at high frequencies \cite{attenuation_of_pass}.

With aforementioned advantages of PASS, several transmission strategies have been developed to employ it in wireless communication networks. In particular, when multiple waveguides are combined with digital precoding, the joint design of baseband precoders and pinching beamforming enables multiuser transmission, commonly referred to as waveguide multiplexing \cite{wave_multiplex_7}. In this architecture, each dielectric waveguide is connected to a dedicated RF chain and controlled by a baseband processor. Pinching antennas are typically assumed to radiate equal power, while beamforming is realized through antenna activation or movement \cite{wave_multiplex_2}. Under this framework, joint optimization of digital precoding and pinching beamforming has been demonstrated to outperform conventional fixed-aperture hybrid architectures in multiuser downlink scenarios for objectives such as transmit power minimization, sum-rate maximization, and minimum-rate maximization \cite{wave_multiplex_1,wave_multiplex_2,wave_multiplex_3,wave_multiplex_5,wave_multiplex_6,wave_multiplex_7}. Recent studies further extend waveguide multiplexing to multi-mode PASS configurations, where multiple guided modes within a single waveguide carry different data streams and pinching antennas with distinct propagation constants selectively couple to specific modes for multiuser transmission \cite{pass_hardware_survey, multimode_pass}.

Despite promising communication-level results, most existing works treat the radiation of pinching antennas abstractly and do not explicitly utilize the coupled-mode physics governing power transfer. A limited number of studies account for the physical structure of PASS elements \cite{wave_multiplex_2,wave_multiplex_3,wave_multiplex_4_atten,multimode_pass}. In these works, equal-power radiation is achieved by pre-designing transfer lengths based on wave propagation \cite{wave_multiplex_2,wave_multiplex_4_atten}, while other approaches adjust radiation through transverse movement of pinching antennas that modifies the coupling coefficient \cite{wave_multiplex_3}. The same structural control has also been employed in multi-mode PASS to enable mode-selective coupling between antennas and guided modes for multiuser transmission \cite{multimode_pass}. More flexible schemes introduce power allocation factors using movable antennas \cite{wave_multiplex_5}; however, the radiation parameter is not linked to a physically tunable parameter governed by the coupled-mode model. Consequently, PASS does not provide independently controllable complex radiation weights, limiting the effectiveness of pinching beamforming, particularly in waveguide multiplexing. This limitation becomes clear when compared with conventional analog beamforming, where adjustable complex weights across antenna elements are realized using external phase shifters or reconfigurable components such as varactor-based metasurface arrays \cite{hMIMO1,trihybrid_mimo}. Such architectures achieve an effective precoder dimension larger than the number of RF chains by directly assigning element weights. In contrast, pinching beamforming lacks an equivalent controllable weighting stage because radiated amplitudes are determined by coupling conditions rather than assigned variables. As a result, existing approaches mainly optimize antenna placement or activation rather than performing true amplitude-tunable analog beamforming.

Enabling analog beamforming with full spatial DoF via pinching antennas requires a physically tunable parameter that can be controlled rapidly and precisely. Current PASS designs regulate radiation by adjusting the coupling coefficient or the effective interaction length under phase-matched propagation; however, both mechanisms offer limited controllability for realizing amplitude-tunable analog beamforming. Motivated by these limitations, we propose a new PASS architecture that enables radiation weight control through phase-mismatch manipulation of guided waves within a coupled-mode framework under single-mode operation. In guided-wave physics, phase-mismatch between coupled waves is a well-known mechanism for regulating power transfer and forms the basis of electrically controlled optical switches and modulators \cite{direc_coupler_2,direc_coupler_3}. Recent advances in tunable-permittivity materials demonstrate electrically adjustable propagation characteristics at radio frequencies. In particular, liquid-crystal-loaded dielectric waveguides allow voltage-controlled rotation of anisotropic molecules, enabling continuous tuning of the effective permittivity and the propagation constant of the guided mode. These structures have been widely used to realize electrically tunable phase shifters and beam steering devices \cite{lqc_1, lqc_2} as they provide fast and precise tunability in such phased array beamforming structures. Despite this feasibility, the use of electrically controlled phase-mismatch for regulating radiation in PASS communication systems has not yet been investigated. To date, the radiation coefficient of a pinching antenna remains either a mechanically controlled parameter, adjusted through antenna movement affecting the coupling coefficient \cite{wave_multiplex_3}, or a structural parameter determined by the physical length of the antenna \cite{wave_multiplex_2,wave_multiplex_4_atten}.

The proposed mechanism enables amplitude-tunable pinching beamforming by controlling the propagation constants of pinching antennas, thereby introducing controllable phase-mismatch between the guided modes of the source waveguide and the pinching antennas. This phase-mismatch controllability enables tunable complex weights of the radiation coefficients. Building on the aforementioned advances in material tunability, the mechanism can be practically realized through electrically controllable material properties. Consequently, PASS evolves from a purely geometry-controlled radiation mechanism into a weight-adaptive analog beamforming architecture. This transformation acts as a performance multiplier, facilitating the integration of PASS into state-of-the-art hybrid architectures such as tri-hybrid systems \cite{pass_hardware_survey}. By utilizing controllable weighting, PASS enhances its beamforming DoF while retaining its inherent advantages, including large-area deployment capability, “last-meter” proximity to users, and the movable reconfigurability of its radiating elements. This approach enables independent complex-weight control of individual pinching antennas, significantly extending the practicality of existing PASS implementations, including those based on movable antenna concepts. In particular, conventional equal-power radiation, commonly assumed in the literature, emerges as a special case of the proposed framework. More importantly, the method introduces a new paradigm of amplitude-tunable pinching beamforming, unlocking additional flexibility for spatial domain multiple access and performance optimization.

Building upon this new concept, the main contributions of this paper are summarized as follows:
\begin{itemize} 
 \item We propose a physics-based amplitude-tunable hardware model for PASS, enabled by controlling the propagation constants of pinching antennas. Using coupled-mode theory, we establish the analytical relationship between the amplitude–phase response of each element and variations in propagation constants of pinching antennas, thereby linking their tunable material properties such as electrical permittivity to radiation controls. The proposed model enables electrically manageable and adjustable radiation of pinching antennas along the main waveguide.
 \item We apply the derived hardware model to the signal modeling of multi-element PASS configurations. The introduction of phase-mismatch via tunable propagation constants induces simultaneous phase and amplitude variations in the radiation of individual elements. This enables a generalized multi-element signal model that captures amplitude-tunable pinching beamforming. As a special case, we obtain the conventional equal-power radiation model by enforcing analytically calculated phase-mismatch conditions on the propagation constants, thereby covering widely used movable and discrete activation PASS schemes within this unified framework.
\item To validate the proposed model, we investigate its performance in a multi-user downlink network. We formulate an optimization problem for digital precoding and pinching beamforming aimed at sum-rate maximization under transmit power constraints. To address the resulting non-convex problem, we adopt a hybrid solution strategy: for a given PASS array configuration, the digital precoders are optimized using the well-established weighted minimum mean square error (WMMSE) algorithm, while the PASS configuration is subsequently refined via a search-based heuristic approach employing genetic algorithms guided by the WMMSE-optimized precoders. 
\item Owing to the unified hardware framework of the proposed pinching antenna model, we extend the optimization algorithm to include conventional PASS schemes, such as movable and discrete activation configurations based on equal-power radiation. This enables a systematic comparison with existing PASS strategies. Numerical results show that the proposed radiation control mechanism enhances hybrid precoding capability, yielding consistent performance gains over existing PASS schemes and conventional fixed arrays, with more pronounced improvements in high inter-user interference regimes. This is due to the additional DoF provided by tunable complex weights. We further examine practical limitations, including material tunable ranges and weight quantization, demonstrating the feasibility of the proposed amplitude-tunable PASS architecture.

\end{itemize}

\IEEEpubidadjcol

\noindent\textit{Notation:} 
Matrices and vectors are denoted by boldface capital and lower-case letters, respectively. For vectors and matrices, $(\cdot)^T$ and $(\cdot)^H$ denote the transpose and Hermitian transpose, respectively. For a vector $\mathbf{w}$, $\|\mathbf{w}\|$ denotes the Euclidean norm. Scalars are denoted by lower-case letters, and $|\cdot|$ denotes the absolute value. The sinc function is defined as $\mathrm{sinc}(x) = \sin(\pi x)/(\pi x)$.

\section{PHASE-MISMATCH RADIATION CONTROL AND SIGNAL MODELING FOR PASS}

In this section, we derive a physics-based model for radiation control of a pinching antenna based on the phase mismatch between the eigenmode propagation constants of the source waveguide and the antenna. The power transfer is characterized using the transmission parameters of an equivalent two-port system as a function of this phase mismatch. By tuning the antenna propagation constant via its material properties, the phase mismatch becomes controllable, enabling direct control of the radiated power and its associated complex radiation weight. The formulation is then extended to multiple pinching antennas coupled to a single-mode waveguide, leading to the corresponding PASS signal model.

\subsection{Single-Antenna Radiation Model with Phase-Mismatch-Induced Amplitude Control}
For modeling, we adopt the directional-coupler representation \cite{pass_hardware_survey}, in which the pinching antenna is modeled as an open-ended secondary waveguide placed at a small fixed spacing from the main source waveguide. Under single-mode operation, the close proximity of the two waveguides enables overlap of their evanescent fields. The resulting field interaction can be described using coupled-mode theory under the weak-coupling assumption \cite{wave_multiplex_2,direc_coupler_1}. 

Assuming single-mode propagation along the $z$-direction in both waveguides, $\beta_1$ and $\beta_2$ denote the propagation constants of the main waveguide and the pinching antenna, respectively, and $A_1(z)$ and $A_2(z)$ represent the corresponding complex modal amplitudes coupled along the propagation direction. Then, the electric fields in the main waveguide and the pinching antenna can be expressed as
\begin{subequations}\label{eq:Eq1}
\begin{align}
\mathbf{E}_1(x,y,z) &= \mathbf{E}_1(x,y) A_1(z) e^{-j \beta_1 z}, \label{eq:Eq1a} \\
\mathbf{E}_2(x,y,z) &= \mathbf{E}_2(x,y) A_2(z) e^{-j \beta_2 z}. \label{eq:Eq1b}
\end{align}
\end{subequations}
Here, the propagation constants are related to the material properties as $\beta_1 = k_0 n_1$ and $\beta_2 = k_0 n_2$, where $n_1$ and $n_2$ are the refractive indices,  $k_0 = \frac{2\pi}{\lambda_0}$ denotes the free-space wavenumber with $\lambda_0$ denoting the free-space wavelength.
Then, based on coupled-mode theory, the equations governing change of complex amplitudes in $z$-direction can be written:

\begin{subequations}\label{eq:Eq3}
\begin{align}
\frac{d A_1(z)}{d z}
&= -j \, \kappa_{21} \, e^{j \Delta \beta z} \, A_2(z),
\label{eq:Eq3a} \\
\frac{d A_2(z)}{d z}
&= -j \, \kappa_{12} \, e^{-j \Delta \beta z} \, A_1(z),
\label{eq:Eq3b}
\end{align}
\end{subequations}
where $\Delta \beta = \beta_1 - \beta_2$ denotes the phase-mismatch per unit length between the propagation constants of the two guides. The coupling coefficients in \eqref{eq:Eq3}, denoted by $\kappa_{12}$ and $\kappa_{21}$, are determined by the transverse field distributions and physical parameters of the structure, such as the spacing between the waveguides \cite{wave_multiplex_3}. 
Then, by applying boundary conditions as  $A_1(0)=1$ and  $A_2(0)=0$, the solution for coupled equations can be derived as\cite{direc_coupler_1}:
\begin{subequations}\label{eq:Eq5}
\begin{align}
A_1(z) &=
e^{j\Delta \beta z /2}
\left(
\cos(\gamma z)
- j \frac{\Delta \beta}{2\gamma}
\sin(\gamma z)
\right), \label{eq:Eq5a} \\
A_2(z) &=
\frac{\kappa_{21}}{j\gamma}
\, e^{-j\Delta \beta z /2}
\sin(\gamma z), \label{eq:Eq5b}
\end{align}
\end{subequations}
where $\gamma = \sqrt{\kappa_{12}\kappa_{21} + \left(\frac{\Delta \beta}{2}\right)^2}$ denotes the eigenvalue of the coupled system.
Then, the total power of guided waves within the main waveguide $P_1(z)$ and the pinching antenna $ P_2(z)$ in relation to the initial source power $ P_1(0)$ can be written by:
\begin{subequations}\label{eq:Eq8}
\begin{align}
P_1(z) &= P_1(0)
\left[
\cos^2(\gamma z)
+ \left(\frac{\Delta \beta}{2\gamma}\right)^2
\sin^2(\gamma z)
\right], \label{eq:Eq8a} \\
P_2(z) &= P_1(0)
\frac{|\kappa_{21}|^2}{\gamma^2}
\sin^2(\gamma z). \label{eq:Eq8b}
\end{align}
\end{subequations}
The derived expressions indicate that energy is periodically exchanged between the main waveguide and the pinching antenna along the propagation direction. In the special case where the waveguide and the pinching antenna have identical propagation constants $\beta_1 = \beta_2$, with equal refractive indices $n_1 = n_2$, and assuming symmetric coupling $\kappa = \kappa_{12} = \kappa_{21}$, complete power transfer from the waveguide to the pinching antenna occurs at $z = L_0 = \frac{\pi}{2\kappa}$, referred to as the transfer length \cite{wave_multiplex_2}.  This condition corresponds to the phase-matched case, which constitutes the primary approach available to date in existing PASS implementations \cite{wave_multiplex_2,wave_multiplex_3,wave_multiplex_4_atten}. In contrast, we extend this framework to capture the effect of phase mismatch, thereby enabling controllable radiation power.

Consider a symmetric directional coupler model with $\kappa_{12} = \kappa_{21} = \kappa$ and a fixed pinching-antenna length $L_0 = \frac{\pi}{2\kappa}$. Transmission parameters of the source waveguide $T_{11}$  and pinching antenna $T_{21}$ can be derived as a function of phase-mismatch by inserting $z = L_o = \frac{\pi}{2\kappa}$ into \eqref{eq:Eq5} as:

\begin{subequations}\label{eq:Eq10}
\begin{align}
T_{11} &= 
e^{j \Delta \beta L_o / 2}
\left(
\cos\!\left( \frac{\pi}{2} \theta \right)
- j \frac{\Delta \beta L_o}{2}
\text{sinc}\!\left(\frac{\theta}{2}\right)
\right), \label{eq:Eq10a} \\
T_{21} &=
\frac{\pi}{2}
e^{-j \pi / 2}
e^{-j \Delta \beta L_o / 2}
\text{sinc}\!\left(\frac{\theta}{2}\right), \label{eq:Eq10b}
\end{align}
\end{subequations}
where $\theta = \sqrt{1+\left( \frac{\Delta \beta L_0}{\pi} \right)^2}$. Here, $T_{11}$ and $T_{21}$ denote the two-port transmission coefficients of the coupled waveguide--antenna structure under excitation from the main waveguide. The coefficient $T_{11}$ corresponds to the through field in the main waveguide, while $T_{21}$ measures the field coupled to the pinching antenna and radiated into free space. As observed from \eqref{eq:Eq10}, introducing a phase-mismatch induces both amplitude and phase variations in the guided field. 
Hence, this mechanism enables controllable radiation through the phase mismatch $\Delta \beta$, where the resulting normalized power transfer function, denoted by $\mathcal{T}$, from the main waveguide to the pinching antenna is given by
\begin{equation}
\mathcal{T}(\Delta \beta)
=
\frac{\pi^2}{4}\,
\mathrm{sinc}^2\!\left(\frac{\theta}{2}\right).
\label{eq:Eq12}
\end{equation}

\begin{figure}[t]
    \centering
    \includegraphics[width=0.7\columnwidth]{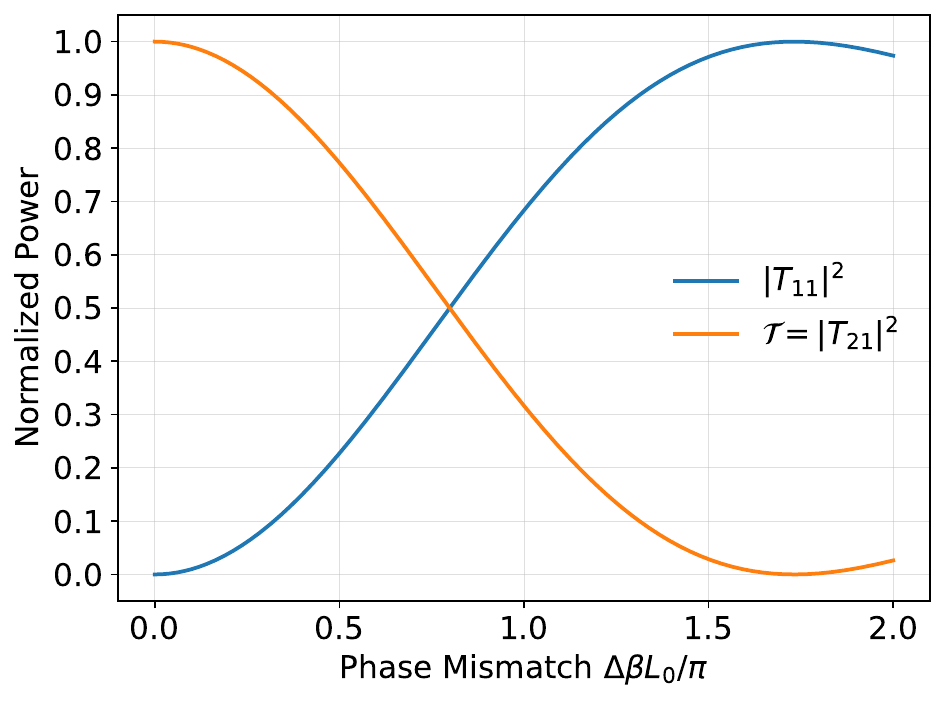}
    \caption{Power distributions in the pinching antenna and the waveguide  in response to phase-mismatch.}
    \label{fig:Fig2}
\end{figure}
This expression characterizes the radiated power through the pinching antenna, exhibiting a sinc-based variation with respect to $\Delta \beta$. This characteristic, along with the remaining power in the waveguide, given by $|T_{11}|^2 = 1 - \mathcal{T} = 1 - |T_{21}|^2$, is illustrated in Fig.~\ref{fig:Fig2}. It can be observed that for $0 < \Delta \beta L_0 < \pi \sqrt{3}$, the power transfer to the pinching antenna $\mathcal{T}$ varies continuously between 0 and 1, enabling controllable amplitude profile of the radiated power. In particular, $\Delta \beta L_0 = 0$ corresponds to the phase-matched condition, under which complete power transfer to the pinching antenna occurs. In contrast, $\Delta \beta L_0 = \pi \sqrt{3}$ leads to complete suppression of radiation, with the power remaining entirely in the  main waveguide.
This two-port representation establishes the fundamental radiation mechanism of a single pinching antenna under phase-mismatch control and provides, for the first time, an explicit closed-form relationship between phase-mismatch and the resulting complex radiation characteristics, enabling tunable radiation model.

\subsection{Multi-Antenna Signal Model with Phase-Mismatch-Induced Amplitude Control}

Building on this foundation, we extend the single-antenna formulation to multiple pinching antennas, where the cascaded interaction of transmission coefficients leads to a structured and physically consistent signal model for PASS. To this end, we consider the setup illustrated in Fig.~\ref{fig:pass_schematic}, consisting of a dielectric waveguide supporting single-mode propagation with propagation constant $\beta_\text{g} = k_0 n_{\text{g}}$ and attenuation constant $\alpha_\text{g}$, where $n_{\text{g}}$ is the effective refractive index of the guided mode. In this setup, $N_\text{a}$ pinching antennas are positioned along the waveguide at longitudinal locations $\{z_n\}_{n=1}^{N_\text{a}}$, where $z_n$ denotes the position of the $n$-th pinching antenna. Furthermore, all pinching antennas are assumed to have identical lengths $L_0$ and are placed at a fixed transverse separation from the source waveguide, resulting in identical coupling coefficients $\kappa_0$. We further assume that the coupling coefficient and antenna length satisfy $\kappa_0 L_0 = \frac{\pi}{2}$. Each pinching antenna supports a guided mode with propagation constant $\beta_\text{p} = k_0 n_{\text{p}}$, where $n_{\text{p}}$ is the effective refractive index of the pinching antenna. We assume that $n_{\text{p}}$ is electrically tunable through variations in the effective permittivity or permeability of the antenna material \cite{lqc_2}. Accordingly, each pinching antenna experiences a distinct phase-mismatch $\Delta \beta_n = \beta_{\text{g}} - \beta_{\text{p},n}$. By tuning $n_{\text{p}}$, these phase-mismatches can be adjusted from phase-matched (fully activated) to mismatched (deactivated) conditions, thereby enabling continuous control of the radiated power via \eqref{eq:Eq12}, as illustrated in Fig.~\ref{fig:pass_schematic}.
\begin{figure}[t]
    \centering
    \includegraphics[width=0.9\columnwidth]{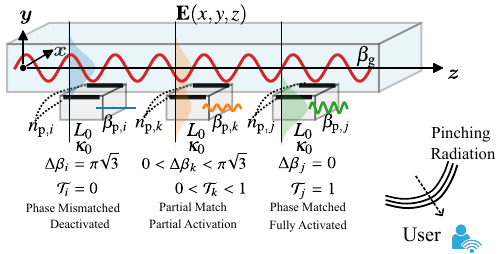}
    \caption{Illustration of PASS with amplitude-tunable hardware model.}
    \label{fig:pass_schematic}
\end{figure}

To derive the signal model for multiple elements, we first define the in-waveguide propagation of the unperturbed guided wave as
\begin{equation}
g_n = e^{-(\alpha_\text{g} + j\beta_\text{g}) z_n},
\label{eq:Eq13}
\end{equation}
where $z_n$ denotes the longitudinal position of the $n$-th pinching antenna from the feed of the waveguide.
Based on the transmission parameters defined in \eqref{eq:Eq10}, the cascaded transmission coefficient $a_n$ of the $n$-th antenna, which captures the effect of all preceding pinching antennas, can be expressed as
\begin{equation}
a_n
=
\left( \prod_{i=1}^{n-1} T_{11}^{(i)} \right) 
T_{21}^{(n)},
\quad n = 1,\ldots,N_\text{a} .
\label{eq:Eq14}
\end{equation}
Accordingly, the complex signal radiated by the $n$-th pinching antenna is given by $x_n = a_n g_n$, and the corresponding radiated power is $P_n = |a_n|^2 e^{-2\alpha_g z_n}$. Then, assuming a lossless transmission model for each pinching antenna such that 
$|T_{21}^{(n)}|^2 + |T_{11}^{(n)}|^2 = 1$, an adjustable power radiation model 
based on controllable phase-mismatch can be expressed as
\begin{equation}
\mathcal{T}(\Delta\beta_n)\, e^{-2\alpha_g z_n}
=
|T_{21}^{(n)}|^2 e^{-2\alpha_g z_n}
=
\frac{P_n^{\mathrm{desired}}}
{\displaystyle
\prod_{i=1}^{n-1}
\left(1-\mathcal{T}(\Delta\beta_i)\right)},
\label{eq:Eq17}
\end{equation}
where $P_n^{\mathrm{desired}}$ denotes the desired radiated power of the 
$n$-th pinching antenna. This controllable power radiation model can be used to derive different power radiation scheme of PASS, as formalized in the following lemmas.

\begin{lemma}[Amplitude-Tunable Pinching Antennas]
\label{AT-PASS}
Let the vector of tunable wavenumber phase-mismatches of all antennas be defined as
\(
\boldsymbol{\Delta\beta} =
\begin{bmatrix}
\Delta\beta_{1} & \Delta\beta_{2} & \cdots & \Delta\beta_{N_\text{a}}
\end{bmatrix}^{T}.
\)
Then, the amplitude and phase of the tunable complex radiation weights of the antennas can be expressed as
\begin{subequations}\label{eq:Eq18}
\begin{align}
|T_{21}^{(n)}|
&=
\frac{\pi}{2}\,\mathrm{sinc}\!\left(
\frac{1}{2}
\sqrt{
1 + \left( \frac{\Delta \beta_n L_0}{\pi} \right)^2
}
\right), \label{eq:Eq18a} \\
\angle T_{21}^{(n)}
&=
-\left(\frac{\pi}{2}+\frac{\Delta\beta_n L_0}{2}\right),
\quad n=1,\ldots,N_\text{a}. \label{eq:Eq18b}
\end{align}
\end{subequations}

The feasible set of phase-mismatch parameters for a fixed pinching-antenna length $L_0$ is defined as
\begin{equation}
\mathcal{B} =
\left\{ \Delta\beta_n \;\middle|\; 0 \le \Delta\beta_n L_0 \le \pi\sqrt{3} \right\},
\quad n=1,\ldots, N_\text{a},
\label{eq:Eq20}
\end{equation}
which enables full control over the dynamic range of antenna amplitudes according to \eqref{eq:Eq12}. This formulation leads to the amplitude-tunable pinching antennas proposed in this work.
\end{lemma}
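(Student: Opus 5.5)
The plan is to specialize the single-antenna transmission coefficient \eqref{eq:Eq10b} to each element and then read off its modulus and argument. For the $n$-th antenna I use $\kappa_{12}=\kappa_{21}=\kappa_0$, $\kappa_0L_0=\pi/2$ and mismatch $\Delta\beta_n$. Evaluating \eqref{eq:Eq5b} at $z=L_0$ gives $\gamma_n L_0=\sqrt{(\pi/2)^2+(\Delta\beta_nL_0/2)^2}=\tfrac{\pi}{2}\theta_n$, with $\theta_n=\sqrt{1+(\Delta\beta_nL_0/\pi)^2}$. This lets me write
\begin{equation*}
\frac{\kappa_0}{\gamma_n}\sin(\gamma_nL_0)=\kappa_0L_0\,\frac{\sin(\pi\theta_n/2)}{\pi\theta_n/2}=\frac{\pi}{2}\,\mathrm{sinc}\!\left(\frac{\theta_n}{2}\right),
\end{equation*}
and together with $1/j=e^{-j\pi/2}$ this recovers \eqref{eq:Eq10b} for each $n$. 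Since the antennas are identical except for $\Delta\beta_n$, the per-antenna coefficient $T_{21}^{(n)}$ does not depend on the upstream elements. Those only enter through the cascade factor in \eqref{eq:Eq14}, which is not part of the claim.

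Next I split $T_{21}^{(n)}$ into modulus and phase. The modulus of the exponential factors is one, so $|T_{21}^{(n)}|=\tfrac{\pi}{2}|\mathrm{sinc}(\theta_n/2)|$. For the absolute value to drop and give \eqref{eq:Eq18a} exactly, I need $\mathrm{sinc}(\theta_n/2)\ge 0$. This holds when $\theta_n/2\in[1/2,1]$, since $\sin(\pi x)\ge0$ for $x\in[0,1]$. The phase is then the sum of the exponents, $-\pi/2-\Delta\beta_nL_0/2$, which is \eqref{eq:Eq18b} modulo $2\pi$. This sign condition is the main point to watch: outside that range the sinc turns negative and adds an extra $\pi$ to the phase. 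The phase formula is therefore valid exactly where the feasible set is defined, which motivates $\mathcal B$.

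Finally I justify $\mathcal B$ in \eqref{eq:Eq20}. The condition $\theta_n\le 2$ is equivalent to $(\Delta\beta_nL_0/\pi)^2\le 3$, that is $|\Delta\beta_nL_0|\le\pi\sqrt3$. Restricting to the nonnegative branch (since $\mathcal T$ is even in $\Delta\beta$) gives $0\le\Delta\beta_nL_0\le\pi\sqrt3$. On this interval the map $\Delta\beta_nL_0\mapsto\theta_n$ is increasing from $1$ to $2$. The function $x\mapsto\mathrm{sinc}(x)$ is strictly decreasing and continuous on $[1/2,1]$, going from $2/\pi$ to $0$. Hence $\mathcal T(\Delta\beta_n)=|T_{21}^{(n)}|^2$ from \eqref{eq:Eq12} decreases continuously and bijectively from $1$ to $0$. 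Every amplitude in $[0,1]$ is therefore attained by a unique $\Delta\beta_n\in\mathcal B$, which gives full dynamic-range control. This also lets any target in \eqref{eq:Eq17} be realized whenever its right-hand side lies in $[0,1]$. The only step needing real care is showing that sinc is monotone on $[1/2,1]$; I would check this by noting that its derivative $(\pi x\cos\pi x-\sin\pi x)/(\pi x^2)$ is negative there, because $\tan(\pi x)<\pi x$ on $(1/2,1]$.
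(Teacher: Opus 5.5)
Your proposal is correct and follows the paper's own route: substitute $z=L_0$ with $\kappa_0L_0=\pi/2$ into \eqref{eq:Eq5b} to obtain \eqref{eq:Eq10b}, read off the modulus and argument, and justify $\mathcal{B}$ from the behavior of $\mathcal{T}$ in \eqref{eq:Eq12}. Two of your steps go beyond the paper, which only reads the amplitude range off Fig.~\ref{fig:Fig2}: the check that $\mathrm{sinc}(\theta_n/2)\ge 0$ on $\mathcal{B}$, so that \eqref{eq:Eq18b} picks up no extra $\pi$, and the strict-monotonicity argument showing that $\Delta\beta_n \mapsto \mathcal{T}$ is a bijection from $\mathcal{B}$ onto $[0,1]$.
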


\begin{remark}
For the phase-mismatch range $0 < \Delta\beta < \pi\sqrt{3}/L_0$, the power-transfer function $\mathcal{T}$ varies between $0$ and $1$, enabling continuous control of the radiated power. In addition, $\Delta\beta_n$ induces a phase shift in the radiated signal, as given in \eqref{eq:Eq18b}. Hence, its tunability enables joint amplitude–phase control, allowing analog beamforming with complex radiation weights and extending pinching-antenna operation beyond conventional phase-matched designs, with functionality comparable to tunable-metasurface-based hybrid architectures.
\end{remark}

Amplitude controllability of pinching antennas over $\Delta \beta$ presented in \eqref{eq:Eq18} can be extended to special case of Equal-Power Radiation, which is widely used with movable and discrete activation schemes. 
\begin{lemma}[Equal-Power Radiation via Phase-Mismatch]
\label{EPA-PASS}
Let $\mathcal{M}\subseteq\{1,\dots,N_\text{a}\}$ denote the index set of the $M$ active
pinching antennas. Define the corresponding binary antenna activation vector
$\mathbf{b}\in\{0,1\}^{N_\text{a}}$, where $b_n=1$ if the $n$-th antenna is active and
$b_n=0$ otherwise. Then, the phase-mismatch vector
$\boldsymbol{\Delta\beta}\in\mathbb{R}^{N_\text{a}}$ can be constructed as

\begin{equation}
\label{eq:21}
\boldsymbol{\Delta\beta} =
\begin{cases}
{\Delta\beta_n^{\mathrm{equal}}}   & \text{if }  b_n=1, n\in\mathcal{M} \\
0 & \text{if }  b_n=0, n\notin\mathcal{M}.
\end{cases}
\end{equation}
In \eqref{eq:21}, the phase-mismatch values for the active antennas follow the equal-power radiation model and are computed as
\begin{equation}
\Delta\beta_m^{\mathrm{equal}}
=
\frac{\pi}{L_0}
\sqrt{
\left(
2
\,\mathrm{sinc}^{-1}
\!\left(
\frac{2 \delta_m}{\pi}
\right)
\right)^2
-1
},
 \forall m \in \mathcal{M},
\label{eq:Eq22}
\end{equation}
where
\[
\delta_m = \sqrt{\frac{P_{\mathrm{equal}}}{e^{-\alpha_g z_m}-(m-1)P_{\mathrm{equal}}}},
\qquad
P_{\mathrm{equal}}=\frac{1}{M}.
\]

\textit{Proof:} See Appendix.
\begin{remark}
The equal-power radiation model represents the commonly adopted scenario for pinching antennas in the literature, where physics-based formulations are often overlooked in optimization. Existing coupled-mode approaches achieve equal-power radiation under phase-matched conditions by varying antenna length \cite{wave_multiplex_2,wave_multiplex_4_atten} or by moving antennas away from the waveguide to adjust the coupling \cite{wave_multiplex_3, multimode_pass}. However, such mechanisms limit practical applicability, as antenna activation or deactivation requires structural modifications or precise control of the antenna–waveguide separation, which cannot be achieved with fast and continuous response. In contrast, the equal-power model in Lemma~2, derived from phase-mismatch radiation, enables dynamic and practical control through electrically tunable material properties. It therefore serves as a baseline case of the more general amplitude-tunable PASS model in Lemma~1.
\end{remark}
\end{lemma}

\section{SYSTEM MODEL AND PROBLEM FORMULATION}
Based on the physics-based signal model of amplitude-tunable pinching antennas, we extend this framework to multi-waveguide PASS architectures that enable multi-user downlink transmission through waveguide multiplexing \cite{wave_multiplex_7}.
\subsection{Signal Processing for Multiple Waveguides}

We consider a multi-user downlink MISO system with \(K\) users, where \(s_k\) denotes the transmitted symbol for user \(k \in \mathcal{K} \triangleq \{1,\ldots,K\}\). The PASS system consists of \(N_\text{w}\) parallel dielectric waveguides placed over the user deployment region with the set of waveguides, each extending over tens of meters to facilitate extended LoS communication. Each waveguide contains \(N_\text{a}\) amplitude-tunable pinching antennas yielding a total number of \(N \triangleq N_\text{w}N_\text{a}\) radiating elements. 
The PASS structure is connected to the digital baseband precoder through \(N_\text{w}\) RF chains, corresponding to different waveguides. The baseband processor multiplexes the data streams and generates digital beamforming vectors \(\mathbf{w}_k \in \mathbb{C}^{N_\text{w} \times 1}\) corresponding to streaming of \(s_k\), which are fed into the corresponding waveguides to control the propagation of the guided reference waves. The signal radiated from this setup can be written as
\begin{equation}
\mathbf{x}_k =  \mathbf{G}\mathbf{A}\mathbf{w}_k s_k ,
\label{eq:Eq23}
\end{equation}
where $\mathbb{E}[|s_k|^2]=1$,  $\mathbf{A}$ represents the block diagonal matrix of cascaded pinching-antenna coefficients corresponding to the complex amplitudes of elements. Also, $\mathbf{G}$ is a diagonal matrix capturing the in-waveguide propagation response. Specifically, the elements of $\mathbf{G} \in \mathbb{C}^{N \times N}$ are defined as
\begin{equation}
\mathbf{G}_{(i-1)N_\text{a}+l,\,(i-1)N_\text{a}+l} = g_{i,l},
\label{eq:Eq24}
\end{equation}
where $g_{i,l}$ represents the in-waveguide propagation sampled at the $l$-th pinching antenna on the $i$-th waveguide, defined according to \eqref{eq:Eq13}. The matrix $\mathbf{A} \in \mathbb{C}^{N \times N_\text{w}}$ can be written as
\begin{equation}
\mathbf{A} = \mathrm{blkdiag}(\mathbf{a}_1,\mathbf{a}_2,\ldots,\mathbf{a}_{N_\text{w}}),
\label{eq:Eq25}
\end{equation}
where $\mathbf{a}_i \in \mathbb{C}^{N_\text{a}\times1}$ denotes the vector of complex weights of the elements for the $i$-th waveguide, and elements of $\mathbf{A}$ can be derived using \eqref{eq:Eq14} as
\begin{equation}
a_{i,l} =
\left(\prod_{j=1}^{l-1} T_{11}^{(j,i)}\right) T_{21}^{(l,i)}, 
\quad l=1,\ldots,N_\text{a}.
\label{eq:Eq26}
\end{equation}
The matrix $\mathbf{A}$ in \eqref{eq:Eq25} captures the dynamic reconfigurability of the pinching antennas through different schemes of the proposed control mechanism, as described in Lemma~\ref{AT-PASS} and Lemma~\ref{EPA-PASS}.

\begin{remark}
In prior PASS models, the radiation of individual antennas has typically been controlled either through activation mechanisms \cite{wave_multiplex_2} or through real-valued power allocation factors introduced at the signal-processing level \cite{wave_multiplex_5}. In contrast, the proposed formulation establishes a physics-based representation in which each pinching antenna is associated with a complex coefficient $a_{i,l}$ directly linked to the tunable propagation mismatch. As a result, the matrix $\mathbf{A}$ provides simultaneous control over both the amplitude and phase of the radiated signals, enabling flexible radiation shaping and beam steering. This introduces a completely new DoF for PASS-based hybrid beamforming architectures.
\end{remark}

For the channel model, we consider a LoS link between the antennas and the users based on a near-field model \cite{wave_multiplex_12_atten}. The free-space channel coefficient between the $k$-th user located at position $\mathbf{u}_k$ and the $n$-th pinching antenna of the PASS setup located at $\mathbf{p}_{n}$ is given by
\begin{equation}
h_{k,n}
=
\frac{\lambda_0}{4\pi d_{k,n}}
\exp\!\left(
-j \frac{2\pi}{\lambda_0} d_{k,n}
\right),
\label{eq:Eq27}
\end{equation}
where \(d_{k,n}=\|\mathbf{u}_k-\mathbf{p}_{n}\|\) denotes the distance between the $k$-th user and the $n$-th pinching antenna. Note that, $n$ can be defined for the \(l\)-th antenna on the \(i\)-th waveguide with \(n=(i-1)N_\text{a}+l\). Then, the channel vector can be represented with \(\boldsymbol{h}_k \in \mathbb{C}^{N \times 1}\), i.e., \(\boldsymbol{h_k} \triangleq \begin{bmatrix}
h_{k,1},\ldots, h_{k,n}, \ldots, h_{k,N}
\end{bmatrix}^H\). 
Using \eqref{eq:Eq23} and \eqref{eq:Eq27}, the received signal at User \(k\) can be formulated as
\begin{equation}
\label{eq:Eq28}
y_k = \left(\mathbf{h}_k\right)^{H}
\sum_{i=1}^{K} \mathbf{G}\mathbf{A}\mathbf{w}_i s_i
+ n_k .
\end{equation}
where $n_k \sim \mathcal{CN}(0,\sigma_k^2)$ denotes the additive white Gaussian noise (AWGN) at user $k$ with the noise power \(\sigma_k^2\). Based on this, the signal-to-interference-plus-noise ratio (SINR) of $k$-th User can be expressed as
\begin{equation}
\label{eq:Eq29}
\mathrm{SINR}_k =
\frac{\left|\mathbf{h}_k^{H}\mathbf{G}\mathbf{A}\mathbf{w}_k\right|^2}
{\sum_{\substack{i=1 \\ i\neq k}}^{K}
\left|\mathbf{h}_k^{H}\mathbf{G}\mathbf{A}\mathbf{w}_i\right|^2
+
\sigma_k^2 }.
\end{equation}
\begin{remark}
In \eqref{eq:Eq28} and \eqref{eq:Eq29}, $\mathbf{h}_k$ and $\mathbf{G}$ depend on the spatial positions of the pinching antennas. Consequently, adjusting antenna positions provides control over the effective channel seen by the radiating elements. In contrast, the matrix $\mathbf{A}$ depends solely on the propagation-constant mismatches of the elements. Their optimization enables the design of complex radiation weights, thereby introducing an additional DoF for interference mitigation through effective precoder design.
\end{remark}

\subsection{Problem Formulation}
The objective of this work is to maximize the achievable sum rate of a multi-user PASS system by optimizing digital precoders and configurable parameters of the pinching antennas under different PASS architectures. We first consider the amplitude-tunable PASS architecture introduced in Lemma~\ref{AT-PASS}. We then consider commonly studied schemes based on discrete activation and movable pinching antennas with semi-continuous activation, where elements typically radiate equal powers \cite{wave_multiplex_5}. In this work, the equal-power radiation model is realized via tunable $\boldsymbol{\Delta\beta}$, as introduced in Lemma~\ref{EPA-PASS}, providing a practical and electrically controllable mechanism compared to conventional physics-based approaches \cite{wave_multiplex_2,wave_multiplex_3}.

Let $\mathbf{P} \in \mathbb{R}^{N \times 3}$ denote the matrix of spatial positions of the pinching antennas, where the $n$-th row $\mathbf{p}_n$ corresponds to the position of the $n$-th antenna. Let the tunable phase-mismatch vector be denoted by
\(
\boldsymbol{\Delta \beta} =
\left[
\Delta\beta_{1}, \ldots, \Delta\beta_{N}
\right]^{T}.
\)
Then, the optimization problem can be formulated in a generalized form for different PASS schemes as

\begin{equation}
\label{eq:Eq30}
\begin{aligned}
\max_{\Omega} \quad
& \sum_{k=1}^{K}
\log_2\!\left(
1+
\frac{
\left|\mathbf{h}_k^H(\mathbf{P})
\mathbf{G}(\mathbf{P})
\mathbf{A}(\boldsymbol{\Delta \beta})
\mathbf{w}_k\right|^2
}{
\sum_{\substack{i=1\\ i\neq k}}^{K}
\left|\mathbf{w}_i\right|^2
+\sigma_k^2
}
\right)
\\
\text{s.t.}\quad
& \sum_{k=1}^{K}
\left\|
\mathbf{w}_k
\right\|^2
\le P_{\max},  \\\quad &\boldsymbol{\Delta \beta} \in \mathcal{B},
\end{aligned}
\end{equation}
where the first and second constrains in \eqref{eq:Eq30} corresponds to the total transmitted power constraint on the precoder vector, and the feasible set of the phase-mismatches defined by \eqref{eq:Eq20}, respectively. Also, the optimization variable set $\Omega$ depends on the considered PASS architecture and can be defined follows:
\begin{itemize}

\item \textbf{Amplitude-Tunable PASS:}  
$\Omega = \{\mathbf{w}_k,\boldsymbol{\Delta\beta}\}$ with antenna fixed at initial array positions $\mathbf{P}^0$. Based on Lemma~\ref{AT-PASS}, the amplitude matrix $\mathbf{A}$ is constructed using \eqref{eq:Eq26}, with the corresponding transmission parameters $(T_{11},T_{21})$ for the optimized $\boldsymbol{\Delta\beta}$ are obtained using \eqref{eq:Eq10}.

\item \textbf{Discrete-Activation PASS:}  
$\Omega = \{\mathbf{w}_k,\mathbf{b}\}$ with fixed antenna positions $\mathbf{P}^0$, where $\mathbf{b}\in\{0,1\}^{N}$ denotes the antenna activation vector for $N$ elements.  Then, the vector $\mathbf{b}$ inherently determines $\boldsymbol{\Delta\beta}$ via \eqref{eq:21} and \eqref{eq:Eq22}, and this yields the equal-power radiation model of the activated elements based on Lemma~\ref{EPA-PASS}.

\item \textbf{Movable PASS:}  
$\Omega = \{\mathbf{w}_k,\mathbf{P},\mathbf{b}\}$ with $\mathbf{b}$ represents the activation vector controlling the equal-power radiation model as elaborated in Discrete-Activation case. Then, the activated antennas ($b_n=1$) adjust their positions around their nominal locations by selecting from predefined grid points through the optimization of $\mathbf{P}$.
\end{itemize}
The optimization problem in \eqref{eq:Eq30} under the amplitude-tunable PASS scenario extends conventional formulations by enabling direct optimization of complex antenna radiation weights, thereby introducing a new framework for pinching beamforming. The associated challenge is that the physics-based characterization in \eqref{eq:Eq18} couples amplitudes and phases through a sinc-type response, resulting in a highly non-linear and multi-modal objective landscape. Existing PASS frameworks for the same objective primarily consider antenna position control via movability or activation under fixed radiation powers \cite{wave_multiplex_6, wave_multiplex_12_atten, wave_multiplex_1}. These schemes are also encompassed within the proposed framework under different variable sets $\Omega = \{\mathbf{w}_k,\mathbf{b}\}$ or $\{\mathbf{w}_k,\mathbf{P},\mathbf{b}\}$, where radiation power is controlled by tunable phase mismatches, enabling practical implementation via tunable material properties.

\section{PROPOSED SOLUTION}
The solution of problem~\eqref{eq:Eq30} requires the coupled design of the digital precoders $\{\mathbf{w}_k\}_{k=1}^{K}$ and the PASS configuration variables $\boldsymbol{\Delta\beta}$, $\mathbf{b}$, and $\mathbf{P}$. Moreover, the construction of the PASS radiation matrix $\mathbf{A}$ depends on the physics-based coupled-mode formulation through the phase-mismatch parameters $\boldsymbol{\Delta\beta}$, which are restricted to the feasible set $\mathcal{B}$. As a result, problem~\eqref{eq:Eq30} becomes highly non-convex and difficult to solve directly. To address this challenge, we adopt an alternating optimization (AO) approach, in which \eqref{eq:Eq30} is decomposed into two subproblems and solved iteratively. In the first subproblem, for a given PASS configuration, the optimal digital beamforming vectors are obtained using the WMMSE algorithm, enabling effective solution of multi-user beamforming problems in interference-limited environments \cite{wave_multiplex_12_atten,wave_multiplex_6}.  In the second subproblem, the PASS configuration variables are optimized using a genetic algorithm (GA), a population-based search method well suited for handling discrete and continuous design variables \cite{GA_in_Wireless}. 
\subsection{Subproblem 1: Optimizing the Digital Precoder}
\label{Subproblem1_WMMSE}
When the waveguide response matrix $\mathbf{G}$ and the PASS weight matrix $\mathbf{A}$ are fixed, the optimization problem in \eqref{eq:Eq30} reduces to a form analogous to the multi-user MIMO sum-rate maximization problem. In this case, the WMMSE framework can be employed to reformulate the non-convex problem into an equivalent and more tractable form \cite{wmmse}. While global optimality is not guaranteed due to the inherent non-convexity, the WMMSE algorithm is known to converge to a stationary point and has been recently adopted in multi-user PASS systems, demonstrating robust performance \cite{wave_multiplex_12_atten, wave_multiplex_6}. Accordingly, the resulting WMMSE-based optimization problem can be expressed as
\begin{equation}
\label{eq:Eq31}
\begin{aligned}
\max_{\{\mathbf{w}_k,u_k,v_k\}_{k=1}^{K}}
& \quad \sum_{k=1}^{K} \left(\log_2(v_k) - v_k\, e_k(u_k,{\mathbf{w}_k})\right)  \\
\text{s.t.}\quad
& \sum_{k=1}^{K}\|\mathbf{w}_k\|^2 \le P_{\max}, \\
& v_k \ge 0, \quad \forall k \in \mathcal{K},
\end{aligned}
\end{equation}
where $u_k$ and $v_k$ denote auxiliary variables introduced by the WMMSE reformulation, and the mean-square error (MSE) for User k is given by
\begin{equation}
\label{eq:Eq32}
\begin{aligned}
e_k(u_k,\mathbf{w_k})
&=
\left|1-u_k\mathbf{c}_k\mathbf{w}_k\right|^2+ \sum_{j\neq k}
\left|u_k\mathbf{c}_k\mathbf{w}_j\right|^2
+ \sigma^2 |u_k|^2,
\end{aligned}
\end{equation}
where $\mathbf{c}_k=\mathbf{h}_k^{H}\mathbf{G}\mathbf{A}$.
Variables of \eqref{eq:Eq31} can be decoupled into separate blocks, for which the resulting subproblems become concave. Therefore, the block coordinate descent (BCD) method can be employed to solve the problem iteratively \cite{wmmse}. The corresponding update rules of the BCD variables at iteration $t$ are given as
\begin{equation}
\begin{aligned}
u_k^{(t)} 
&= 
\frac{\mathbf{c}_k \mathbf{w}_k^{(t-1)}}%
{\sum_{j=1}^{K} 
\left|\mathbf{c}_k \mathbf{w}_j^{(t-1)}\right|^{2}
+ \sigma^{2}}, \\[6pt]
v_k^{(t)} 
&= 
\big(e_k(u_k^{(t)},\mathbf{w_k}^{(t-1)})\big)^{-1}, \\[6pt]
\mathbf{w}_k^{(t)} 
&= 
v_k^{(t)} u_k^{(t)}
\left(
\sum_{j=1}^{K} 
v_j^{(t)} |u_j^{(t)}|^{2} 
\mathbf{c}_j^{H} \mathbf{c}_j
+ \lambda_\text{p} \mathbf{I}
\right)^{-1}
\mathbf{c}_k^{H},
\end{aligned}
\label{eq:Eq33}
\end{equation}
where $\lambda_\text{p}$ is the Lagrangian multiplier for
equalizing transmit power constraint, which can be set by applying the bisection method \cite{wmmse}. The digital precoders $\{\mathbf{w}_k\}_{k=1}^{K}$ are obtained by iteratively updating $u_k^{(t)}$, $v_k^{(t)}$, and $\{\mathbf{w}_k^{(t)}\}_{k=1}^{K}$ until convergence.

\subsection{Subproblem 2: PASS Optimization}
For a given set of digital precoding vectors of the PASS-based hybrid beamforming architecture, denoted by $\{\mathbf{w}_k\}_{k=1}^{K}$, the optimization problem in \eqref{eq:Eq30} requires determining the PASS array configuration under different architectural scenarios. Depending on the considered PASS scheme, the optimization variables may include the antenna position matrix $\mathbf{P} \in \mathbb{R}^{N \times 3}$, the binary activation vector $\mathbf{b}\in\{0,1\}^{N}$, or the tunable phase-mismatch vector $\boldsymbol{\Delta \beta} \in \mathcal{B}^{N}$. The resulting problem is highly coupled. In particular, the antenna positions matrix $\mathbf{P}$, affect both the waveguide response matrix $\mathbf{G}$ and the user channel vectors $\{\mathbf{h}_k\}_{k=1}^{K}$. The activation vector $\mathbf{b}$ controls which antennas participate in radiation and consequently influences the feasible phase-mismatch parameters $\boldsymbol{\Delta \beta}$ for equal-power radiation. Moreover, the phase-mismatch values $\boldsymbol{\Delta \beta}$ directly determine the complex radiation coefficients through the construction of $\mathbf{A}$. In addition, in all PASS architectures, the feasible set of $\boldsymbol{\Delta \beta}$ is intrinsically coupled to the radiation amplitudes of the elements through the sinc-type relationship defined in \eqref{eq:Eq18}. This nonlinear mapping makes the resulting problem difficult to handle using conventional mathematical optimization techniques. Consequently, the joint design of these variables results in a highly non-convex optimization problem.

Considering the non-convexity and multi-variable structure of the resulting optimization problem, we adopt a metaheuristic optimization framework for configuring the proposed PASS architecture. Recent works have employed population-based particle swarm optimization \cite{multimode_pass, PSO_PASS} and learning-based methods \cite{wave_multiplex_1}, demonstrating effective performance with reasonable computational complexity. The problem \eqref{eq:Eq30} involves  discrete, and continuous optimization variables and is highly coupled, exhibiting a multi-modal objective landscape due to the underlying phase-mismatch based radiation characteristics. To address this, we employ GAs \cite{PYDEAP}, which provide a population-based search mechanism with inherent diversity preservation through mutation and crossover operations, improving exploration of the search space and reducing the likelihood of premature convergence to local optima \cite{GA_in_Wireless}. In this approach, candidate solutions are represented as chromosomes encoding the optimization variables, and their performance is evaluated via a fitness function. Evolutionary operators such as selection, crossover, and mutation are then applied iteratively to evolve the population across successive generations. In the following subsections, we describe how this framework is applied to the proposed PASS configuration problem.

\subsubsection{Chromosome Representation}
Let the GA population consist of $N_{\text{pop}} $ candidate solutions with unique chromosomes 
$\{\mathbf{\chi}_i\}_{i=1}^{N_{\textbf{pop}}}$. Each chromosome encodes the configurable parameters of the PASS architecture \[
\boldsymbol{\chi}_i =
\{\mathbf{b}_i,\boldsymbol{\Delta\beta}_i,\mathbf{P}_i\},
\quad i=1,\ldots,N_{\text{pop}},
\]
where different PASS architectures correspond to different subsets of these variables as described below:
\begin{itemize}
    \item \textbf{Amplitude-Tunable PASS (AT-PASS):} In AT-PASS, tunable phase-mismatches of elements are stored in the chromosomes via real-value encoding in the feasible set of $\mathcal{B}$ as described in \eqref{eq:Eq20} 
    \[
    \boldsymbol{\chi}_i^{\text{AT}}
    =
    \boldsymbol{\Delta\beta}_i,
    \quad
    \boldsymbol{\Delta\beta}_i \in \mathcal{B}^{N},\]
    \item \textbf{Discrete Activation PASS (DAC-PASS):} In DAC-PASS, the activation values of elements are stored in the chromosomes via binary encoding 
    \[
    \boldsymbol{\chi}_i^{\text{DAC}}
    =
    \mathbf{b}_i,
    \quad
    \mathbf{b}_i \in \{0,1\}^{N}.
    \]
    \item \textbf{Movable\footnote{“Movable” denotes configurable antenna placement along the waveguide within the optimization, not the real-time physical movement. In practice, this is achieved by electronically selecting active radiation points on a predefined $\lambda/2$ grid within a bounded region around each nominal antenna position.} PASS (MOV-PASS):}
    In MOV-PASS, each chromosome encodes antenna activation and displacement along the $z$-axis via binary encoding as
    \[
    \boldsymbol{\chi}_i^{\mathrm{MOV}}=[\boldsymbol{\Delta z}_i,\mathbf{b}_i],
    \]
    where $\mathbf{b}_i\!\in\!\{0,1\}^{N}$ denotes the antenna activation vector and 
    $\boldsymbol{\Delta z}_i\!\in\!\big[-\tfrac{\Delta z^{\mathrm{MAX}}}{2},\,\tfrac{\Delta z^{\mathrm{MAX}}}{2}\big]^{N}$ represents the antenna displacements. 
    The displacements are quantized with resolution $\lambda/2$ using 
    $N_\text{q}=\left\lceil \log_2\!\left(\frac{\Delta z^{\mathrm{MAX}}}{\lambda/2}\right)\right\rceil$ bits. 
    The resulting antenna positions are obtained as $z_n=z_n^{0}+\Delta z_n$ for $n=1,\ldots,N$, which defines the antenna position matrix $\mathbf{P}\in\mathbb{R}^{N\times3}$ with rows $\mathbf{p}_n=[x_n,y_n,z_n]^T$, where $z_n^{0}$ denotes the initial array configuration and $\Delta z^{\mathrm{MAX}}=d-\lambda/2$ ensures that adjacent antennas with element spacing $d$ do not overlap.
    
\end{itemize}

\subsubsection{Fitness Function Evaluation}
Following the problem formulation in \eqref{eq:Eq30}, the fitness value of each chromosome is defined as the achievable sum-rate of the users for fixed digital precoding vectors $\{\mathbf{w}_k\}_{k=1}^{K}$. The fitness function is expressed as
\begin{equation}
F(\boldsymbol{\chi})
=
\sum_{k=1}^{K}
\log_2\!\left(
1+
\frac{
\left|
\mathbf{h}_k^{H}(\boldsymbol{\chi})
\mathbf{G}(\boldsymbol{\chi})
\mathbf{A}(\boldsymbol{\chi})
\mathbf{w}_k
\right|^2
}{
\sum_{\substack{j=1\\ j\neq k}}^{K}
\left|
\mathbf{h}_k^{H}(\boldsymbol{\chi})
\mathbf{G}(\boldsymbol{\chi})
\mathbf{A}(\boldsymbol{\chi})
\mathbf{w}_j
\right|^2
+
\sigma_k^2
}
\right).
\label{eq:Eq34}
\end{equation}

At each generation, the GA evaluates the fitness function in \eqref{eq:Eq34} for all candidate chromosomes 
$\{\boldsymbol{\chi}_i^{\mathrm{OPT}}\}_{i=1}^{N_{\text{pop}}}$, where $\mathrm{OPT}\in\{\mathrm{AT},\mathrm{DAC},\mathrm{MOV}\}$ denotes the considered PASS architecture. 
For each chromosome, the system matrices $\mathbf{A}$, $\mathbf{G}$, and the channel vectors $\mathbf{h}_k$ are constructed according to the encoded configuration. 
In particular, the chromosomes of AT-PASS and DAC-PASS determine the radiation matrix $\mathbf{A}$ through the corresponding phase-mismatch parameters, while MOV-PASS additionally modifies the antenna position matrix $\mathbf{P}$, which affects both $\mathbf{G}$ and the channel vectors $\{\mathbf{h}_k\}_{k=1}^{K}$. 
For DAC-PASS and MOV-PASS, the phase-mismatch vector $\boldsymbol{\Delta\beta}$ required for constructing $\mathbf{A}$ is computed analytically using Lemma~\ref{EPA-PASS}.

\subsubsection{Genetic Operators}
Once the fitness values are computed, the GA applies evolutionary operators, i.e., selection, crossover, and mutation, to the chromosomes $\{\boldsymbol{\chi}_i^{\mathrm{OPT}}\}_{i=1}^{N_{\text{pop}}}$ to generate new populations and explore the solution space. In our implementation, the crossover and mutation rates are set to $0.6$ and $0.3$, respectively, with a population size of $N_{\text{pop}}=100$ and $N_{\text{gen}}$=200 generations. Elitism is also employed to retain the best individual in each generation, ensuring monotonic improvement of the best fitness value.

\subsection{Alternating Optimization Framework and Complexity}
The overall procedure for the solution of \eqref{eq:Eq30} can be summarized within an AO framework, as presented in Algorithm~\ref{alg:AO_PASS}. In each AO iteration, the PASS configuration is updated through the GA stage, which determines the corresponding system matrices and channel responses. Subsequently, the digital precoders are updated once per AO iteration using the WMMSE algorithm for the configuration associated with the best chromosome obtained from the GA population. For fixed system matrices $\mathbf{A}$, $\mathbf{G}$, and channel vectors $\{\mathbf{h}_k\}_{k=1}^{K}$, the WMMSE step computes beamforming vectors that are optimal for the given configuration. In the GA stage, elitism is employed to retain the best individual across generations, which ensures non-decreasing best fitness values over the $N_{\text{gen}}$ generations. Consequently, the proposed AO framework provides iterative improvement of the objective value and is terminated when convergence or a maximum number of iterations is reached. The computational complexity of each AO iteration is dominated by the WMMSE update and the GA-based fitness evaluations, resulting in an overall complexity on the order of $\mathcal{O}(K N_\text{w}^3 + N_{\text{gen}} N_{\text{pop}} K N_\text{w}N_\text{a})$.

\begin{algorithm}[t]
\caption{Proposed AO Framework for Solving \eqref{eq:Eq30}}
\label{alg:AO_PASS}
\footnotesize
\begin{algorithmic}[1]

\State \textbf{Input:} PASS scheme $\mathrm{OPT}\!\in\!\{\mathrm{AT},\mathrm{DAC},\mathrm{MOV}\}$, population size $N_{\text{pop}}$, generations $N_{\text{gen}}$
\State Initialize chromosome population $\{\boldsymbol{\chi}_i^{\mathrm{OPT}}\}_{i=1}^{N_{\text{pop}}}$

\Repeat

\State \textbf{1) Digital Precoder Update}
\State Construct $\mathbf{A},\mathbf{G}, \mathbf{h}_k,$ from best chromosome
\State Run WMMSE (BCD) updates in \eqref{eq:Eq33} to obtain $\{\mathbf{w}_k\}_{k=1}^{K}$

\State \textbf{2) PASS Configuration Update (GA)}
\For{$g=1,\ldots,G$}
\State Decode chromosomes and construct $\mathbf{A},\mathbf{G},\mathbf{h}_k$
\State Evaluate $F(\boldsymbol{\chi}^{\mathrm{OPT}}_i),\; i=1,\ldots,N_{\text{pop}}$ using \eqref{eq:Eq34}
\State Apply selection, crossover, mutation, and elitism
\EndFor

\Until{convergence}

\State \textbf{Output:} Optimized PASS configuration and precoders

\end{algorithmic}
\end{algorithm}

\section{NUMERICAL RESULTS}

\begin{figure}[!t]
    \centering
    \includegraphics[width=0.45\textwidth]{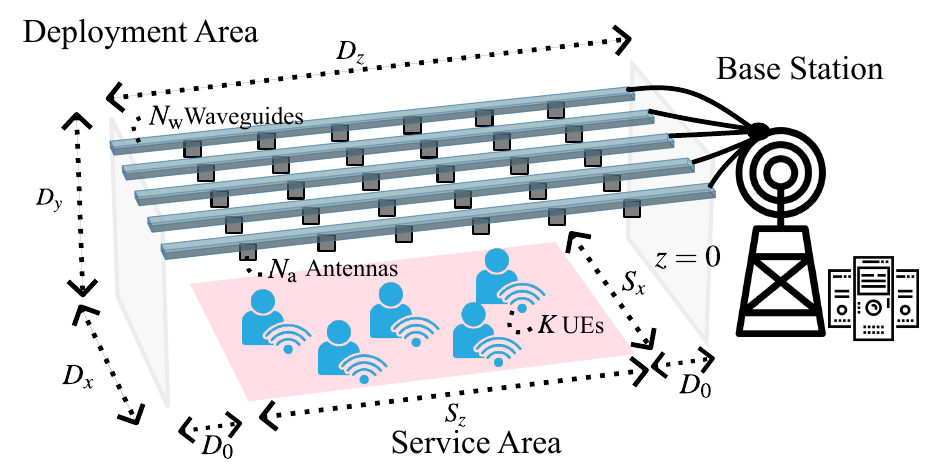}
    \caption{PASS-based hybrid beamforming architecture and the simulation setup.}
    \label{fig:simulation_setup}
\end{figure}
In this section, numerical results are presented to evaluate the performance of the proposed PASS-based hybrid beamforming systems. We consider a downlink multiuser MISO scenario where the hybrid beamforming architecture based on the proposed PASS system is deployed as illustrated in Fig.~\ref{fig:simulation_setup}. In this setup, the digital precoding is performed at the base station (BS), which is located at $z=0$. The PASS structure is mounted on the ceiling of the deployment region of size $D_z \times D_x \times D_y$ at height $y=D_y$ and extends along the $zx$-plane. The system serves $K$ user equipments (UEs) randomly distributed over the service area of size $S_z \times S_x$ on the ground plane ($y=0$) in the $zx$-plane. The waveguides extend along the $z$-direction with length $D_z$ and are distributed across the service region in the $x$-direction, where $D_x=S_x$. The spacing between adjacent waveguides is given by $d_x = S_x/(N_\text{w}-1)$, for $N_\text{w}$ waveguides.  Along each waveguide, $N_\text{a}$ pinching antennas are initially placed at uniformly spaced positions with element spacing $d_z = S_z/(N_\text{a}-1)$ over the service region. Margins of length $D_0$ are reserved at both ends of the waveguide, i.e., $D_z=S_z+2D_0$. In this setup, we evaluate three PASS architectures, namely AT-PASS, DAC-PASS, and MOV-PASS, following the optimization framework described in Algorithm~\ref{alg:AO_PASS}. The tunable wavenumber phase-mismatch values of the pinching antennas are determined according to \eqref{eq:Eq20}. In AT-PASS and DAC-PASS, the pinching antennas remain fixed at their initial positions. In contrast, for MOV-PASS the activated antennas can move on a predefined grid around their initial locations along the $\pm z$ direction with maximum displacement $\Delta z^{\mathrm{MAX}}/2$. This value is set to $\Delta z^{\mathrm{MAX}} = d_z - \lambda_0/2$, ensuring that the spacing between any two consecutive antennas remains at least $\lambda_0/2$. Unless otherwise stated, the simulation parameters of the PASS setup used in the numerical evaluations are summarized in Table~\ref{tab:simulation_parameters}.

As baseline schemes, we first consider a Fixed-PASS architecture, where all pinching antennas along the waveguides radiate with fixed equal powers and no array reconfiguration is performed. As a second baseline, we consider a conventional MISO array with element spacing $\lambda_0/2$ based on fully digital architecture, which enables maximum coherent beamforming gain via the digital precoding. The array follows the same geometry as the PASS setup and is located at the base station ($z=0$), mounted on the ceiling at height $y=D_y$ in the $xz$-plane, with $N_x = N_{\mathrm{w}}$ and $N_z = N_{\mathrm{a}}$ elements, resulting in a total of $N = N_x \times N_z$ antenna elements and RF chains. For both the Fixed-PASS architecture and the conventional MISO array, the digital precoder is obtained using the WMMSE framework. All results are averaged over 500 Monte Carlo realizations of random UE deployments.

\begin{table}[t]
\centering
\caption{Simulation Parameters}
\label{tab:simulation_parameters}
\footnotesize
\begin{tabular}{lc}
\hline
\textbf{Parameter} & \textbf{Value} \\
\hline
Carrier frequency $f$ ($\lambda_0$) & $28\,\mathrm{GHz}$ ($1.07\,\mathrm{cm}$) \\
Noise power $\sigma_k^2$ & $-110\,\mathrm{dBm}$ \\
Number of users $K$ & $5$ \\
Number of waveguides $N_w$ & $5$ \\
Pinching antennas per waveguide $N_\text{a}$ & $6$ \\
Waveguide length $D_z$ & $50\,\mathrm{m}$ \\
Deployment dimensions $(D_x, D_y)$ & $(5\,\mathrm{m},\,10\,\mathrm{m})$ \\
Service area $(S_z \times S_x)$ & $30\,\mathrm{m} \times 5\,\mathrm{m}$ \\
Waveguide edge margin $D_0$ & $10\,\mathrm{m}$ \\
Waveguide spacing $d_x$ & $1.25\,\mathrm{m}$ \\
Antenna spacing $d_z$ & $6\,\mathrm{m}$ \\
Phase-mismatch range $(\Delta\beta L_0)$ & $[0,\; \pi\sqrt{3}]$ \\
Maximum antenna displacement $\Delta z^{\mathrm{MAX}}$ & $d_z-\lambda_0/2 \approx 4.995\,\mathrm{m}$ \\
Waveguide attenuation $\alpha_g$ & $0.08\,\mathrm{dB/m}$ \\
Waveguide refractive index $n_g$ & $1.4$ \\
\hline
\end{tabular}
\end{table}

\subsection{System-Level Performance Evaluation}
\begin{figure}[!t]
    \centering
    \includegraphics[width=0.40\textwidth]{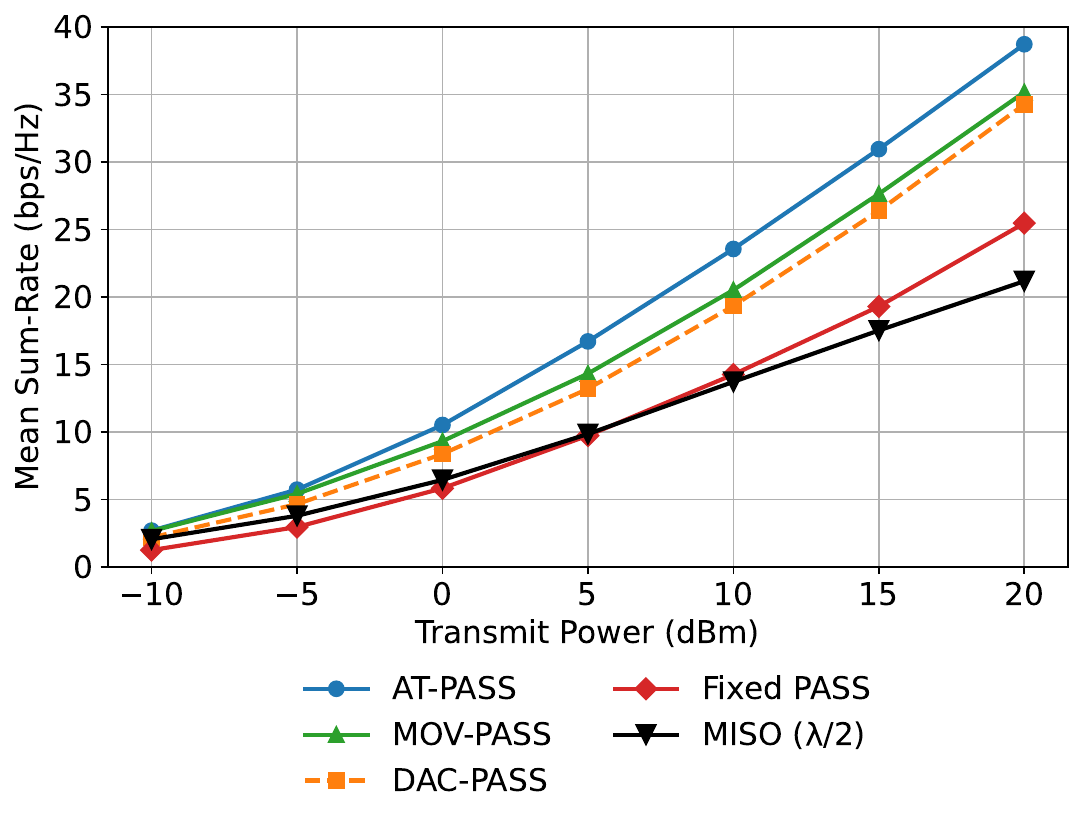}
    \caption{Achievable downlink sum-rate vs transmit power ($P_{\max}$).}
    \label{fig:results_1}
\end{figure}
As a first result, Fig.~\ref{fig:results_1} shows the achievable sum-rate as a function of the transmit power. All three reconfigurable PASS architectures outperform the fixed configurations, including conventional MISO ($\lambda_0/2$) and Fixed-PASS, across all power regimes. This improvement arises from the additional flexibility provided by array reconfiguration, such as antenna activation, movability, and amplitude tunability. The performance gain becomes more pronounced at higher transmit power levels, where the system operates in an interference-limited regime and can better exploit the additional DoF of  offered by reconfigurability. In addition, compared to the conventional MISO ($\lambda_0/2$), PASS systems benefit from their distributed radiating elements. These elements can transmit from locations closer to the users, which reduces path loss and improves the received signal strength. This effect is clearly observed when comparing Fixed-PASS and MISO. Even without any array optimization, Fixed-PASS achieves higher sum-rate at higher transmit power levels ($P_{\max} > 5\,\mathrm{dBm}$). In this regime, the system transitions from noise-limited to interference-limited operation, where the benefits of distributed radiation and improved spatial separability become more pronounced.

As also depicted in Fig.~\ref{fig:results_1}, reconfigurability in PASS systems provides substantial improvements in achievable sum-rate, with performance varying across different architectures.  AT-PASS achieves the highest sum-rate across all power regimes, with the performance gap widening as the system transitions from the low-SNR to the high-SNR region. Specifically, in the low-SNR regime ($P_{\max} = -10\mathrm{dBm}$), where the system is noise-limited, AT-PASS and MOV-PASS achieve comparable performance (approximately 2.7 bps/Hz). In this regime, performance is primarily dictated by received signal power, and MOV-PASS benefits from its ability to reduce path loss via antenna repositioning. In contrast, in the high-SNR regime ($P_{\max} = 20\mathrm{dBm}$), where the system becomes interference-limited, AT-PASS outperforms MOV-PASS (38.7 bps/Hz vs. 35.1 bps/Hz). This is because AT-PASS enables fine-tuned amplitude control of antenna weights, allowing for more effective precoding design to mitigate inter-user interference. Such interference management is limited in MOV-PASS due to its equal-power radiation model, which restricts its ability to fully shape the transmit signals. On the other hand, MOV-PASS consistently outperforms DAC-PASS across all power regimes due to its additional spatial flexibility through position reconfiguration, compared to the fixed-position activation in DAC-PASS.
Having examined the effect of transmit power, we fix the transmit power to $P_{\max} = 15\,\mathrm{dBm}$ for the following results.

\begin{figure}[!t]
    \centering
    \includegraphics[width=0.40\textwidth]{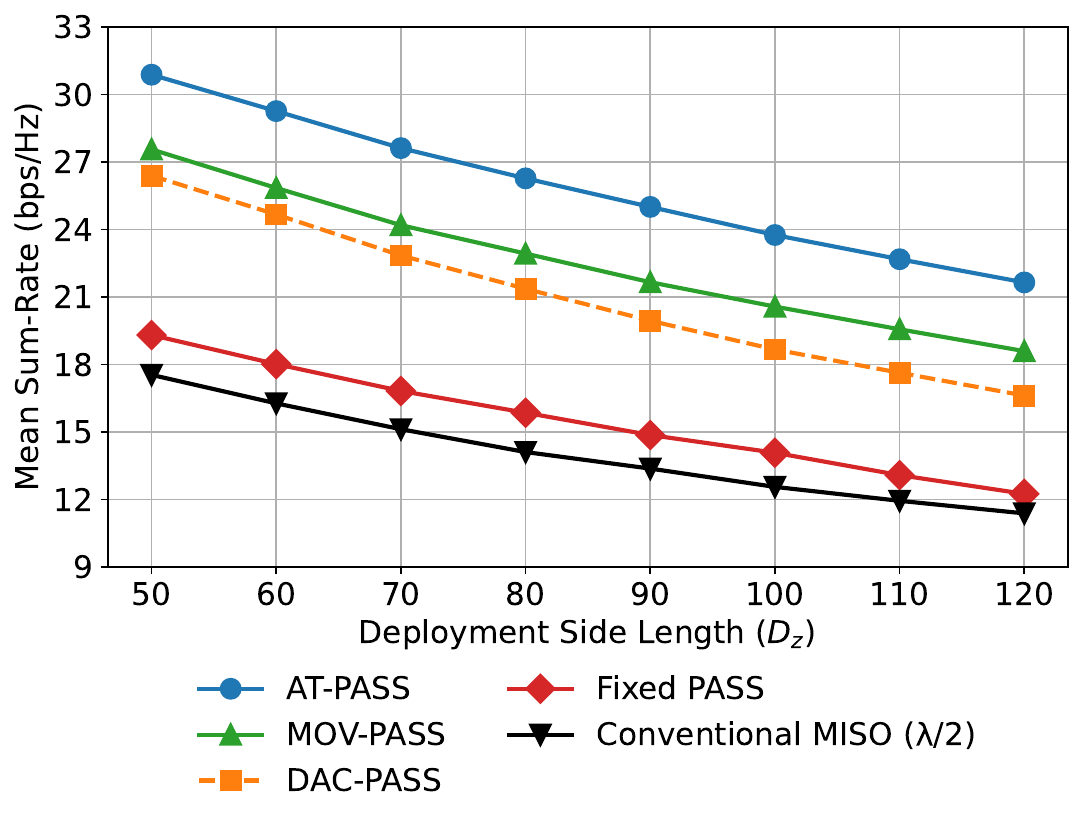}
    \caption{Achievable downlink sum-rate vs deployment area side-length ($D_z$).}
    \label{fig:results_2}
\end{figure}
Next, Fig.~\ref{fig:results_2} shows the achievable sum-rate versus the deployment side length $D_z$, where the waveguide length and the service area length $S_z$ scale according to $D_z = S_z + 2D_0$, with a fixed margin $D_0 = 10\,\mathrm{m}$. While the conventional MISO array benefits from multiplexing gain with $\lambda_0/2$ spacing, PASS architectures exploit distributed antennas to reduce effective propagation distance and enhance spatial diversity.  For $D_z < 100\,\mathrm{m}$, PASS clearly outperforms MISO, with Fixed-PASS achieving around $2$\,bps/Hz gain. As $D_z$ increases, waveguide attenuation reduces this gap to about $1$\,bps/Hz, reflecting a trade-off between MISO multiplexing gain and PASS distributed transmission. 
On the other hand, reconfigurable PASS architectures achieve significantly higher sum-rates than fixed apertures, while both AT-PASS and MOV-PASS consistently outperform DAC-PASS. As the deployment region increases, user channels become less correlated; however, the PASS structure also requires higher reconfigurability due to the increased spacing between elements. In such scenarios, both AT-PASS and MOV-PASS provide improved performance compared to DAC-PASS. Specifically, the performance gap between MOV-PASS and DAC-PASS increases from approximately $1.17$ to $1.98$\,bps/Hz, while for AT-PASS it increases from $4.49$ to $5.03$\,bps/Hz.
The advantage of MOV-PASS over DAC-PASS stems from its ability to reposition antenna elements according to user locations, enabling stronger channel links and improved received signal strength. On the other thand, the performance of MOV-PASS relative to AT-PASS is also limited by the equal-power radiation assumption. AT-PASS enables adjustment of complex weights, providing both amplitude and phase control of the effective precoder. This allows stronger links to receive higher power while adapting other elements accordingly, with phase control providing an additional DoF for link shaping. This explains the performance advantage of AT-PASS over MOV-PASS.

\begin{figure}[!t]
    \centering
    \includegraphics[width=0.40\textwidth]{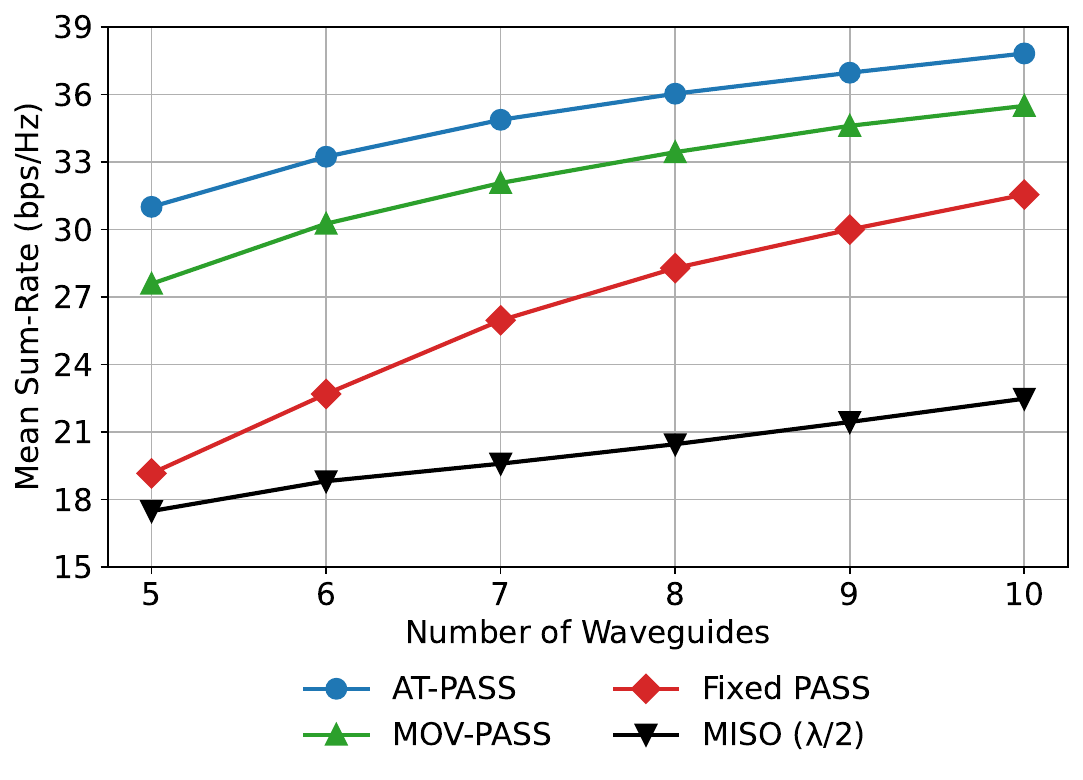}
    \caption{Achievable downlink sum-rate vs waveguide number ($N_\text{w}$).}
    \label{fig:results_3}
\end{figure}
Having established the advantages of AT-PASS and MOV-PASS over DAC-PASS, we focus on AT-PASS and MOV-PASS for the remaining simulations. In Fig.~\ref{fig:results_3}, the effect of $N_\text{w}$ is presented for a $K=5$ simulation setup, where, for the conventional MISO case, this corresponds to an increase in the number of antenna elements in the $x$-direction with $N_x = N_\text{w}$. Hence, the overall dimension of the effective precoder scales similarly for both the conventional MISO array and PASS architectures. As shown in Fig.~\ref{fig:results_3}, this scaling leads to an increase in sum-rate for all schemes. On the other hand, for PASS architectures, increasing $N_\text{w}$ also improves spatial resolution over the region. In addition, the distributed placement of pinching antennas reduces the effective propagation distance to the users. This leads to better scalability of the achievable sum-rate with increasing $N_\text{w}$ compared to the conventional MISO case. As a result, the performance gap increases as $N_\text{w}$ grows. Among PASS architectures, AT-PASS achieves the highest sum-rate across all scenarios. On the other hand, as the contribution of the digital precoder size to the effective precoder becomes more prominent, the limitations of PASS structures with equal-power models become less severe, particularly for the Fixed-PASS architecture. Meanwhile, MOV-PASS achieves more stable performance than Fixed-PASS in all cases due to its additional DoF.

\begin{figure}[!t]
    \centering
    \includegraphics[width=0.40\textwidth]{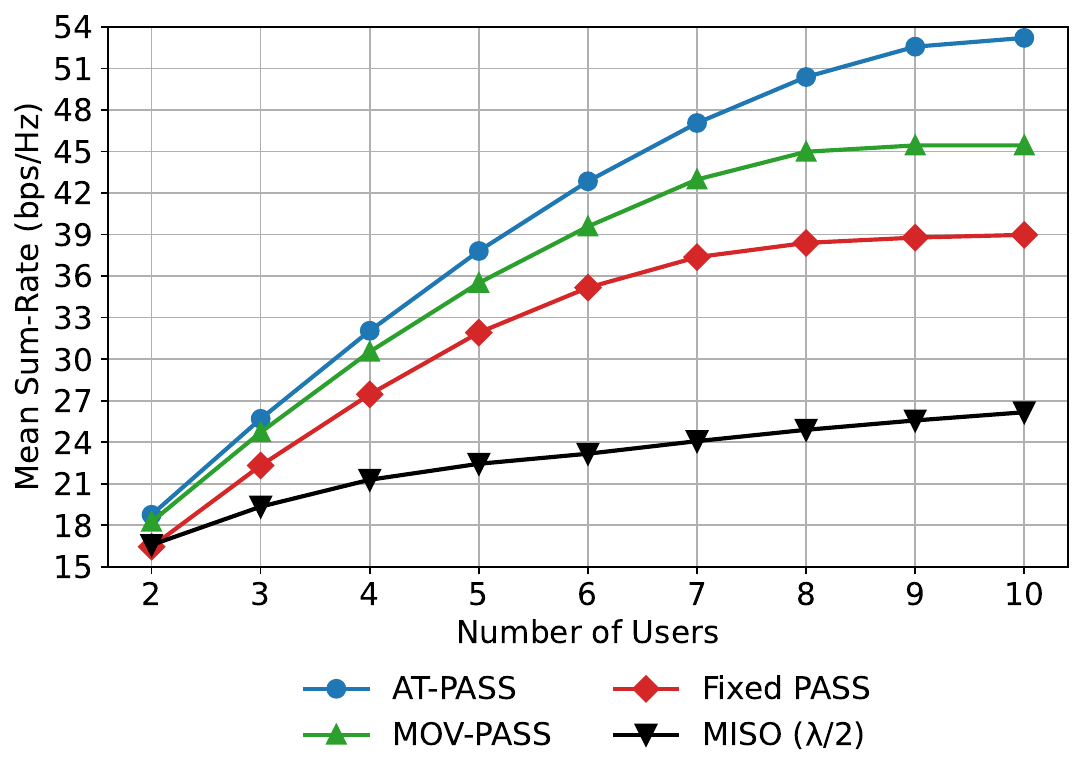}
    \caption{Achievable downlink sum-rate vs Number of Users ($K$) for $N_\text{w}=10$.}
    
    \label{fig:results_4}
\end{figure}
\begin{figure}[!t]
    \centering
    \includegraphics[width=0.40\textwidth]{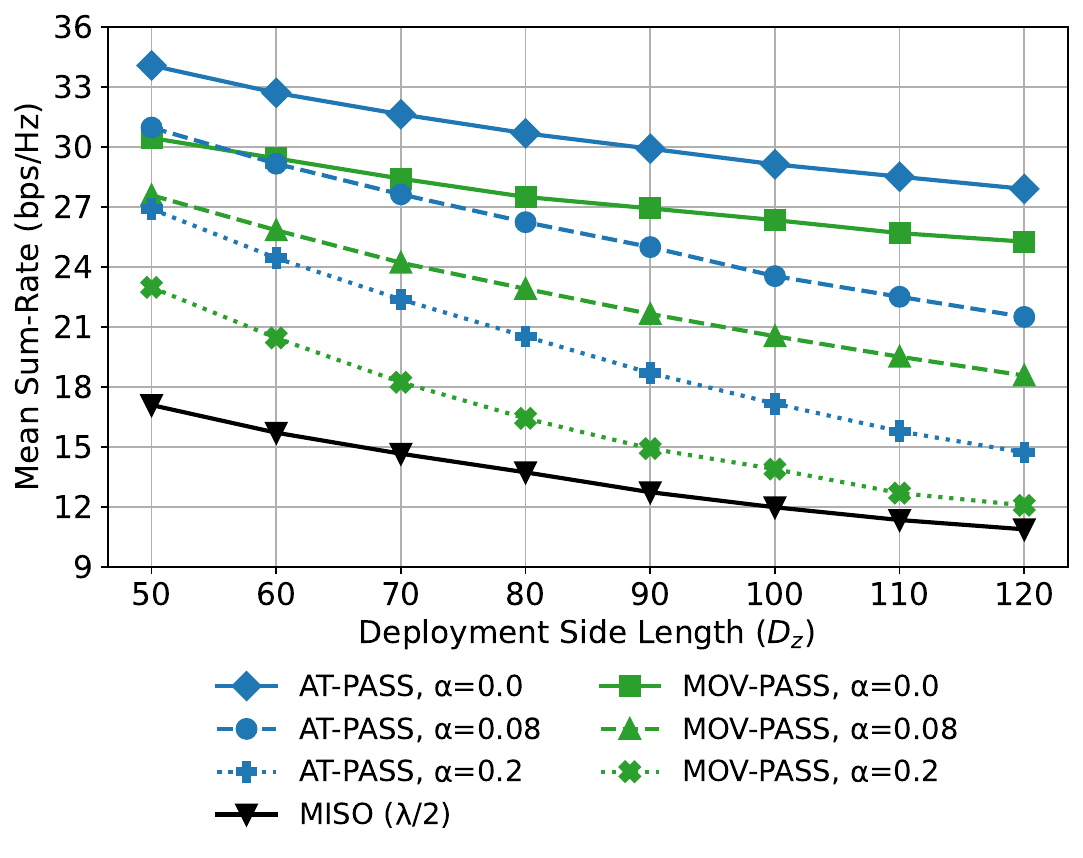}
    \caption{Effect of attenuation for PASS-schemes.}
    \label{fig:results_7}
\end{figure}
Next, Fig.~\ref{fig:results_4} shows the achievable sum-rate versus the number of users $K$ for the setup with $N_x = N_\text{w} = 10$.  Consistent with previous results, all PASS architectures outperform conventional MISO ($\lambda_0/2$), mainly due to reduced path loss. 
For the PASS architectures, the overall sum-rate trend with respect to the number of users follows the well-known behavior. For a low number of users ($K<6$), the sum-rate increases significantly for all PASS architectures, as more DoF become available through the digital precoder. However, beyond this point, the sum-rate begins to saturate due to increased inter-user interference and limited DoF. Furthermore, the results in Fig.~\ref{fig:results_4} again highlight the advantage of amplitude tunability in AT-PASS over MOV-PASS in high-interference scenarios. For $K=2$, AT-PASS and MOV-PASS achieve comparable rates of 18.76 and 18.25\,bps/Hz, respectively, while the gap widens with $K$, reaching 53.2 and 45.4\,bps/Hz at $K=10$, respectively. This gap can be attributed to operation mechanisms of PASS architectures. In low-interference regimes, MOV-PASS can effectively increase the received signal strength in relative to the channel noise by activating antennas closer to the users, where the available power allocated to these active elements.
In contrast, in high-interference regimes, interference management becomes critical. The equal-power radiation in MOV-PASS limits the flexibility of the effective precoder, reducing its ability to suppress inter-user interference. On the other hand, AT-PASS leverages tunable complex weights of individual antennas, enabling greater DoF for the effective precoder. This results in more effective interference mitigation and leads to a noticeable performance gain in achievable sum-rate for high-interference setups.

Finally, Fig.~\ref{fig:results_7} illustrates the effect of the waveguide attenuation $\alpha_\text{g}$ on the achievable sum-rate of AT-PASS and MOV-PASS as a function of the deployment side length. Three cases are considered: a no-loss scenario with $\alpha_\text{g}=0$, a low-loss scenario with $\alpha_\text{g}=0.08$\,dB/m, and a moderate-loss scenario with $\alpha_\text{g}=0.2$\,dB/m \cite{wave_multiplex_12_atten}. As shown in Fig.~\ref{fig:results_7}, increasing $\alpha_\text{g}$ degrades the achievable sum-rate for both PASS architectures. Nevertheless, since free-space propagation losses are substantially higher than typical waveguide losses, PASS-based architectures retain their advantage over conventional MISO arrays. In particular, AT-PASS and MOV-PASS consistently outperform conventional MISO, especially for moderate deployment sizes ($D_z < 100$\,m), where waveguide attenuation is limited and the benefit of reduced effective free-space propagation distance becomes dominant. In this regime, the distributed nature of PASS enables stronger links to users, allowing it to compete with the multiplexing gain of $\lambda_0/2$-spaced MISO arrays. As the deployment size increases, the impact of attenuation becomes more pronounced, reducing the performance gap with MISO. Despite this, AT-PASS maintains the highest performance due to its ability to adapt the amplitude and phase of individual elements, while MOV-PASS provides robust performance through spatial reconfiguration of antennas. These results highlight the critical role of attenuation of PASS systems in multiuser networks, which is often overlooked in prior works.

\subsection{Practical Implementation Considerations}
Having demonstrated the performance advantages of AT-PASS in multiuser downlink networks, we now examine the impact of practical limitations. First, we study the effect of quantization on amplitude tunability, where the complex weights are no longer optimized over a continuous range of phase-mismatches $(\Delta\beta L_0)$ defined in \eqref{eq:Eq20}.  To this end, the optimization algorithm in \ref{alg:AO_PASS} is modified to include binary encoding of phase-mismatches, where the range $(\Delta\beta L_0) \in [0,\; \pi\sqrt{3}]$ is quantized into $2^{N_\text{q}}$ levels, with $N_\text{q}$ denoting the number of bits. Fig.~\ref{fig:results_8} illustrates the sum-rate performance versus transmit power for AT-PASS with $N_q=2,6,10$ in comparison to the ideal continuous case. The results show that for $N_\text{q}=6$ and $10$, the performance closely approaches the ideal continuous case and consistently outperforms MISO across all regimes, demonstrating the practical viability of the proposed system.
\begin{figure}[!t]
    \centering
    \includegraphics[width=0.40\textwidth]{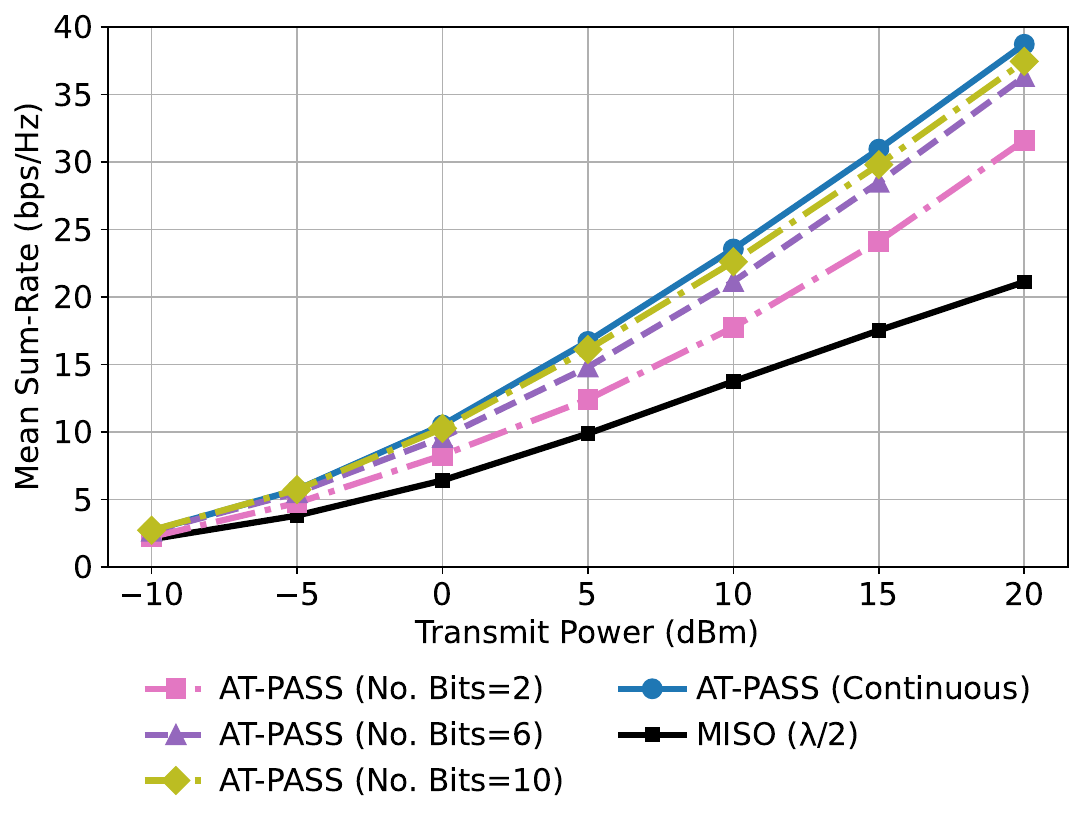}
    \caption{Effect of quantization on amplitude-tunability for AT-PASS.}
    \label{fig:results_8}
\end{figure}
\begin{figure}[!t]
    \centering
    \includegraphics[width=0.40\textwidth]{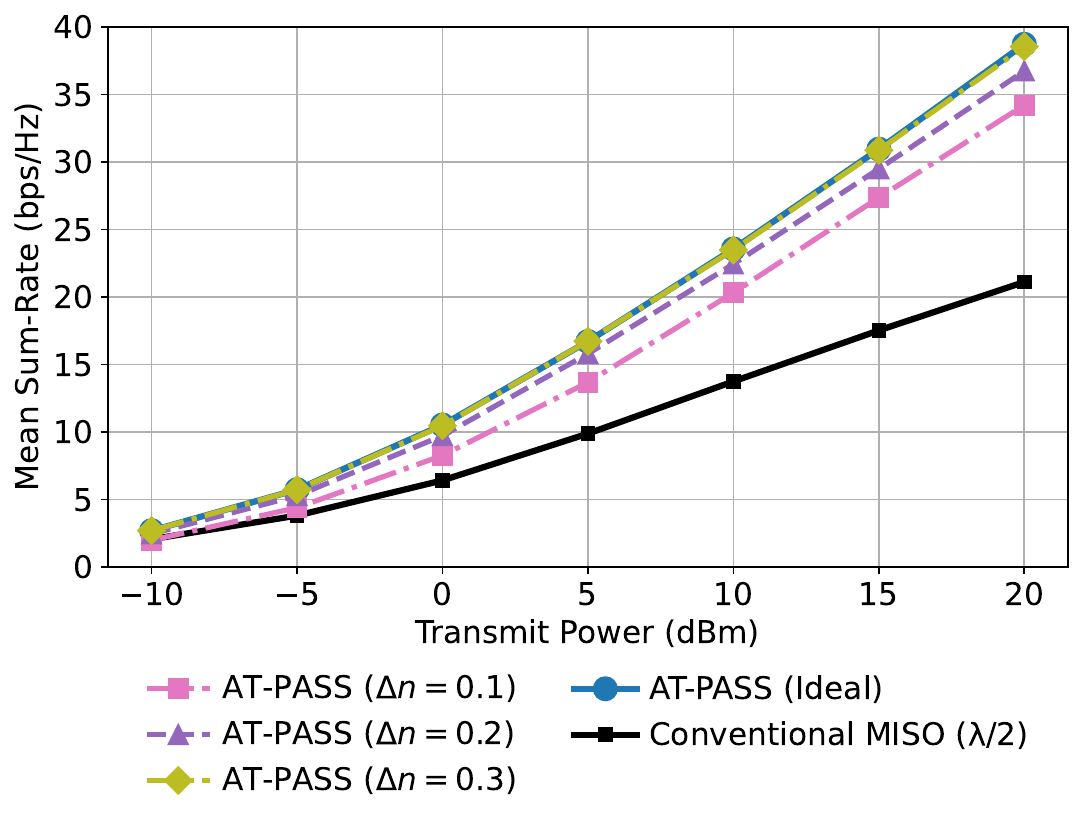}
    \caption{Effect of material tunable range on amplitude-tunability for AT-PASS.}
    \label{fig:results_9}
\end{figure}
\begin{table}[t]
\centering
\caption{Phase shift characteristics for different refractive index variations at pinching length $L_0 = 30\,\mathrm{mm}$.}
\label{tab:phase_shift}
\begin{tabular}{c c c c}
\hline
$\mathbf{\Delta n}$ & $\mathbf{\Delta \beta}$ (rad/m) & $\mathbf{\Delta \phi}$ (rad) & $\mathbf{\Delta \phi}$ (deg) \\
\hline
0.10 & 58.64  & 1.759 & 67.18  \\
0.20 & 117.28 & 3.518 & 134.36 \\
0.30 & 175.92 & 5.278 & 201.54 \\
\hline
\multicolumn{2}{c}{Ideal case} & 5.441 & 311.77 \\
\hline
\end{tabular}
\end{table}
Finally, we investigate the impact of material limitations on the amplitude tunability of the proposed AT-PASS architecture. In the ideal model, the feasible phase-mismatch range is given by $(\Delta\phi=\Delta\beta L_0)\in[0,\;\pi\sqrt{3}]$, enables continuous amplitude control between 0 and 1, as characterized by the power-transfer function in \eqref{eq:Eq12}. In practice, this range is constrained by material properties, particularly the achievable variation in permittivity, leading to a maximum tunable phase shift $\Delta\phi^{\max}$. This limitation can be related to an effective refractive index variation $\Delta n$, where the tunable phase-mismatch of our proposed model can be expressed as $\Delta\beta = k_0 \Delta n$. Practical materials typically can achieve $\Delta n \in [0.1,\,0.3]$ \cite{lqc_2}. To evaluate this effect, we consider $\Delta n = 0.1, 0.2, 0.3$ with $L_0 = 30\,\mathrm{mm}$, and summarize the corresponding phase-tunability ranges in Table~\ref{tab:phase_shift}. The resulting maximum phase shifts $\Delta\phi^{\max}$ are then used for defining the feasible set of tunable phase-mismatches defined in \eqref{eq:Eq20} within Algorithm~\ref{alg:AO_PASS}. Fig.~\ref{fig:results_9} presents the corresponding sum-rate performance compared to the ideal case $\Delta\phi^{\max}=\pi\sqrt{3}$. The results show that for $\Delta n \geq 0.2$, the performance approaches that of the ideal case, confirming the practical feasibility of the proposed AT-PASS architecture.

\section{Conclusion}
This paper proposed, for the first time, an amplitude-tunable PASS architecture that enables independent complex-weight control through phase-mismatch manipulation of guided waves. A physics-based hardware model was developed to link tunable material properties with radiation behavior, providing a unified framework that includes conventional equal-power radiation as a special case. To exploit these capabilities, a sum-rate maximization problem for hybrid precoding in multiuser downlink systems was formulated and solved using an alternating optimization framework combining WMMSE-based digital precoding with genetic algorithm-based PASS configuration. Numerical results demonstrate consistent performance gains over conventional MISO and existing PASS schemes, with more pronounced improvements in interference-limited regimes enabled by complex-weight control. The proposed design was also shown to be robust under practical constraints such as limited material tunability and weight quantization, confirming its feasibility. As the results confirm the advantages of complex-weight control of individual antennas, the proposed approach can be applied to a wide range of PASS use cases with potential extensions to tri-hybrid architectures, amplitude-tunable movable PASS, and uplink scenarios.

\appendix[Proof of Lemma 2]

Let $\mathcal{M}\subseteq\{1,\dots,N_\text{a}\}$ denote the set of the $M$ active
pinching antennas, and define the target equal-radiation power as
$P^{\mathrm{equal}}=1/M$. Let the required power-transfer value of the
$m$-th active antenna be denoted by
$\mathcal{T}(\Delta\beta_m)=\delta_m^2$.
Then, the equal-power radiation constraint can be expressed as
\begin{equation}
P^{\mathrm{equal}}
=
\delta_m^2\, e^{-2\alpha_g z_m}
\prod_{i=1}^{m-1}
\left(1-\delta_i^2\right),
\quad m=1,\ldots,M .
\label{eq_app:Eq1}
\end{equation}

Solving \eqref{eq_app:Eq1} recursively for $\delta_m$, and using the
phase-mismatch power-transfer relation in \eqref{eq:Eq12}, yields
\begin{equation}
\begin{split}
\delta_m^2
&=
\frac{P^{\mathrm{equal}}}{e^{-2\alpha_g z_m}-(m-1)P^{\mathrm{equal}}}
\\
&=
\frac{\pi^2}{4}\,
\mathrm{sinc}^2\!\left(
\frac{1}{2}
\sqrt{
1+\left(\frac{\Delta\beta_m L_0}{\pi}\right)^2
}
\right).
\end{split}
\label{eq_app:delta_transfer}
\end{equation}

Finally, by inverting the right-hand side of \eqref{eq_app:delta_transfer}, the
phase-mismatch required to achieve equal-power radiation from the
active pinching antennas can be obtained as
\begin{equation}
\Delta\beta_m^{\mathrm{equal}}
=
\frac{\pi}{L_0}
\sqrt{
\left(
2\,\mathrm{sinc}^{-1}\!\left(
\frac{2\delta_m}{\pi}
\right)
\right)^2
-1
},
\quad \forall m\in\mathcal{M}.
\label{eq_app:Eq3}
\end{equation}

\bibliographystyle{IEEEtran}
\bibliography{journal_abbreviations,references}

\vfill

\end{document}